\documentclass{article}

\usepackage{enumitem}
\usepackage{color}

\usepackage{microtype}
\usepackage{graphicx}
\usepackage{balance}
\usepackage{subcaption}
\usepackage{booktabs} 

\usepackage{hyperref}

\usepackage[preprint]{icml2026}

\usepackage{amsmath}
\usepackage{amssymb}
\usepackage{mathtools}
\usepackage{amsthm}

\usepackage[capitalize,noabbrev]{cleveref}

\theoremstyle{plain}
\newtheorem{theorem}{Theorem}[section]

\newtheorem{lemma}[theorem]{Lemma}

\theoremstyle{definition}
\newtheorem{definition}[theorem]{Definition}

\theoremstyle{remark}

\newcommand{\N}{\mathcal{N}}
\newcommand{\OO}{\mathcal{O}}

\renewcommand{\algorithmicrequire}{\textbf{input}}
\newcommand{\ALG}{BFM-SWM}
\newcommand{\VALG}{BFM-VM}

\makeatletter
\renewcommand{\theHALC@line}{\thealgorithm.\arabic{ALC@line}}
\makeatother

\newcommand{\linelabel}{%
	\addtocounter{ALC@line}{-1}
	\refstepcounter{ALC@line}
	\label
}

\usepackage[textsize=tiny]{todonotes}

\icmltitlerunning{Budget-Feasible Mechanisms for Submodular Welfare Maximization}

\begin{document}

\twocolumn[
  \icmltitle{Budget-Feasible Mechanisms for Submodular Welfare\\ Maximization in Procurement Auctions}



  \icmlsetsymbol{equal}{*}

\begin{icmlauthorlist}
    \icmlauthor{Shuang Cui}{cs}
    \icmlauthor{He Huang}{cs}
    \icmlauthor{Yu-e Sun}{rail}
    \icmlauthor{Chen Xue}{cs}
\end{icmlauthorlist}

\icmlaffiliation{cs}{
School of Computer Science and Technology,
Soochow University,
Suzhou 215006,
China
}

\icmlaffiliation{rail}{
School of Rail Transportation,
Soochow University,
Suzhou 215006,
China
}

\icmlcorrespondingauthor{He Huang}{huangh@suda.edu.cn}

  
  \icmlkeywords{Machine Learning, ICML}

  \vskip 0.3in
]



\printAffiliationsAndNotice{}  

\begin{abstract}
Budget-feasible procurement auctions play a pivotal role in various AI-driven marketplaces, such as data acquisition and crowdsourcing, where a buyer with a limited budget seeks to procure services from strategic sellers with private costs. While numerous budget-feasible mechanisms have been proposed for the classic objective of maximizing the buyer's valuation, the more challenging and economically significant objective of social welfare maximization has only recently been studied, and existing approaches still sacrifice budget feasibility, thereby limiting their practical applicability. In this paper, we bridge this gap by proposing BFM-SWM, the first budget-feasible mechanism with provable approximation guarantees for submodular welfare maximization in procurement auctions. Our mechanism satisfies standard economic properties, including truthfulness, individual rationality, and non-negative auctioneer surplus. As a by-product, we develop BFM-VM, a variant tailored for valuation maximization, which achieves a deterministic approximation ratio of $1/(12+4\sqrt{3})$ for general submodular functions, substantially improving upon the best-known deterministic ratio of $1/64$ established by [Balkanski et al., SODA 2022], while reducing the running time from $\mathcal{O}(n^2\log n)$ to $\mathcal{O}(n\log n)$. Extensive experiments demonstrate the efficiency and effectiveness of our mechanisms.
\end{abstract}

\section{Introduction}\label{sc:intro}
Procurement auctions, executed by governments or private companies to purchase goods and services from suppliers, have been widely adopted across various domains, including crowdsourcing markets~\cite{singer2013pricing,han2025efficient}, influence maximization~\cite{kempe2003maximizing}, industrial procurement~\cite{bichler2006industrial}, and data acquisition~\cite{fallah2024optimal}. In practice, the available budget is typically limited, imposing a cap on the total payment to suppliers~\cite{leonardi2021budget}.

\cite{singer2010budget} proposed a classic procurement auction setting involving a single auctioneer and a set $\mathcal{N}$ of $n$ sellers, each owning a single item. The auctioneer is endowed with a submodular valuation function $v: 2^{\mathcal{N}} \to \mathbb{R}_{\ge 0}$, where $v(S)$ represents the value the auctioneer derives from acquiring items from sellers $S \subseteq \mathcal{N}$. Each seller $u \in \mathcal{N}$ has a private cost $c(u) \ge 0$, representing the minimum payment required for providing the item. Furthermore, \cite{singer2010budget} pioneered the design of \textit{budget-feasible mechanisms} under this setting, aiming to approximately achieve \textbf{valuation maximization} (i.e., $\max_{S\subseteq \mathcal{N}} v(S)$) while satisfying truthfulness, individual rationality, computational efficiency, and budget feasibility. Following this, a rich line of work has focused on designing more effective budget-feasible mechanisms to improve the approximation ratio $\gamma_v := {v(S)}/{v(O)}$, where $S$ denotes the solution returned by the mechanism and $O$ denotes the optimal solution. For general (not necessarily monotone) submodular valuation functions, the current best randomized approximation ratio is $0.0856$, proposed by \cite{han2025efficient}, while the best deterministic ratio is $0.0156$, proposed by \cite{balkanski2022deterministic}.

Recently, \cite{deng2025procurement} considered a different objective in procurement auctions, namely \textbf{welfare maximization} (i.e., $\max_{S\subseteq \mathcal{N}} \bigl(v(S)-c(S)\bigr)$). This objective can be viewed as analogous to the notion of gains-from-trade in the bilateral trade literature, as it measures the net social value generated by running the procurement auction. Unlike valuation maximization, which may prioritize high-value but exorbitantly costly items, welfare maximization inherently promotes cost-efficiency and allocative efficiency. It tends to favor sellers who provide high marginal value relative to their costs, which is critical for sustainable procurement. 

From a technical perspective, welfare maximization is substantially more challenging than valuation maximization, since its objective $v(\cdot)-c(\cdot)$ may be negative and cannot be directly evaluated as $c(\cdot)$ is private. As a result, existing budget-feasible mechanisms, which typically rely on non-negativity and direct observability of the whole objective function, no longer apply. Notably, \cite{nikolakaki2021efficient} pointed out that even in the simplified setting where costs are public, it is impossible to obtain a constant-factor multiplicative approximation in polynomial time for welfare maximization. Consequently, prior work~\cite{nikolakaki2021efficient,sviridenko2017optimal,harshaw2019submodular,cui2023constrained,feldman2021guess} typically adopts a weaker approximation notion, seeking a solution $S$ satisfying $v(S) - c(S) \geq \gamma_w \cdot v(O) - c(O)$, where $\gamma_w$ denotes the approximation ratio and $O$ denotes the optimal solution. 

Using Myerson's lemma~\cite{myerson1981optimal} and the form of sealed-bid auction, \cite{deng2025procurement} designs mechanisms tailored for welfare maximization that satisfy truthfulness, individual rationality, computational efficiency, and non-negative auctioneer surplus. However, they assume there is no budget constraint and thus significantly simplify the
mechanism design task\footnote{As indicated by~\cite{singer2010budget,chen2011approximability,liu2024budget}, designing budget-feasible mechanisms is significantly more challenging than their unconstrained counterparts. The budget constraint not only restricts the feasible solution space but also couples the allocation and payment rules, preventing a straightforward allocation-then-payment construction that is typically used in unconstrained settings.}, which also renders their mechanisms impractical for real-world scenarios where the buyer's purchasing power is inherently finite. Therefore, designing budget-feasible mechanisms for submodular welfare maximization in procurement auctions remains a significant and challenging open problem.


\subsection{Our Contributions}
In this paper, we address the above open problem by proposing novel mechanisms for welfare maximization. Our main contributions can be summarized as follows:
\begin{itemize}
    \item We propose \ALG, a budget-feasible mechanism for welfare maximization in procurement auctions that satisfies truthfulness, individual rationality, and non-negative auctioneer surplus.
    Under the commonly adopted approximation notion defined above, \ALG~achieves a $0.0328$-approximation for general (not necessarily monotone) submodular functions and a $0.0877$-approximation for monotone submodular functions, up to an arbitrarily small additive error.

    To the best of our knowledge, our \ALG~is the first budget-feasible mechanism with provable approximation guarantees for welfare maximization in procurement auctions.
    
    \item  As a by-product, we develop a variant of the mechanism, named \VALG, tailored for the valuation maximization objective while preserving budget feasibility, truthfulness, and individual rationality. Notably, \VALG~achieves a deterministic approximation ratio of $1/(12+4\sqrt{3}) \approx 0.0528$ for general submodular functions, which significantly improves upon the best-known deterministic ratio of \(1/64 \approx 0.0156\) proposed by \cite{balkanski2022deterministic} in SODA 2022, while reducing the running time from $\OO(n^2\log n)$ to $\OO(n\log n)$.
    \item We conduct extensive experiments to evaluate the empirical performance of our mechanisms. The results convincingly validate their effectiveness and efficiency.
\end{itemize}

\subsection{Challenges and Techniques}\label{sc:challenge}
As mentioned above, welfare maximization is substantially more challenging than valuation maximization. The key obstacles are that the welfare objective $v(\cdot)-c(\cdot)$ may be negative and cannot be directly evaluated as $c(\cdot)$ is private. As a result, existing budget-feasible mechanisms, which typically rely on non-negativity and direct observability of the whole objective function, no longer apply. We note that only \cite{deng2025procurement} has successfully designed a mechanism with provable approximation guarantees for welfare maximization, albeit by sacrificing budget-feasibility. It is worth noting that, as indicated by~\cite{singer2010budget,chen2011approximability,liu2024budget}, designing budget-feasible mechanisms is significantly more challenging than their unconstrained counterparts. The budget constraint not only restricts the feasible solution space but also couples the allocation and payment rules, preventing the straightforward allocation-then-payment construction typically employed in unconstrained settings such as~\cite{deng2025procurement}.

We propose a novel budget-feasible mechanism for welfare maximization, named \ALG, which overcomes the above challenges while satisfying the desired economic properties and achieving provable approximation guarantees. Our \ALG~employs a geometrically increasing threshold that is initialized to an arbitrarily small value independent of the true welfare values, and iteratively increases until no seller can satisfy the threshold requirement. This progressive threshold serves a dual purpose: acting as a key factor in determining price offers and as a benchmark for evaluating the welfare of candidate solutions. This design effectively circumvents the need to directly evaluate the exact welfare of candidate sets, while still theoretically guaranteeing that the generated sets possess sufficiently high welfare. Moreover, since private costs hinder the bounding of individual objective values, we introduce a singleton candidate solution to capture the critical single elements. The seller added to this candidate receives temporal protection against descending price offers, preventing premature exits and preserving procurement opportunities for their high-value item. Additionally, we introduce a control parameter $\beta$ within the pricing rule to regulate the value-to-payment ratio. This ensures that the valuation of the selected outcome is at least a $\beta$-multiple of its payment, effectively enforcing the non-negativity of the resulting welfare and avoiding buyer deficit. Finally, distinct from the sealed-bid format adopted in \cite{deng2025procurement}, \ALG~takes the form of descending clock auctions, thereby offering stronger strategyproofness properties.

The generality and effectiveness of our approach extend beyond welfare maximization, as evidenced by a by-product deterministic variant, \VALG, for the valuation maximization problem. In particular, \VALG\ achieves a deterministic approximation ratio of $1/(12+4\sqrt{3})$, which substantially improves upon the best-known deterministic ratio of $1/64$ by \cite{balkanski2022deterministic} in SODA 2022, while simultaneously reducing the running time from $\OO(n^2\log n)$ to $\OO(n\log n)$. This improvement stems from a design and theoretical analysis fundamentally different from that of \cite{balkanski2022deterministic}. Specifically, \VALG\ constructs disjoint candidate solutions only via threshold-based filtering, rather than relying on additional computationally heavier subroutines such as greedy selection based on marginal gains and unconstrained submodular maximization algorithms. In summary, our techniques are not only robust enough to handle the more challenging welfare maximization, but also lead to stronger guarantees for the classical valuation maximization.
\section{Related Work}
\noindent\textbf{Regularized Submodular Maximization.}
The welfare maximization problem, where the objective takes the form of a submodular valuation function minus cost (i.e., $v(\cdot)-c(\cdot)$), has been extensively studied in the algorithmic literature under the assumption of public costs, a domain known as regularized submodular maximization. Since obtaining a constant-factor multiplicative approximation for this problem in polynomial time is theoretically impossible, researchers~\cite{sviridenko2017optimal,cui2023constrained,feldman2021guess} typically adopt a weaker approximation notion, aiming to find a solution $S$ satisfying $v(S) - c(S) \geq \gamma_w \cdot v(O) - c(O)$, where $\gamma_w$ denotes the approximation ratio and $O$ denotes the optimal solution. Notable algorithms in this category include Distorted Greedy~\cite{harshaw2019submodular}, ROI Greedy~\cite{jin2021unconstrained}, and Cost-Scaled Greedy~\cite{nikolakaki2021efficient}. However, these algorithms are incompatible with procurement auctions where costs are private, due to their reliance on public cost information and lack of economic properties.

To bridge this gap, recent work~\cite{deng2025procurement} proposed a framework for converting regularized submodular maximization algorithms into mechanisms with economic properties. Nevertheless, their approach necessitates sacrificing budget feasibility, thereby limiting its practical applicability.

\noindent\textbf{Budget-Feasible Mechanisms for Valuation Maximization.} 
Valuation maximization can be regarded as a special, more tractable case of welfare maximization, where the objective is restricted to the valuation term $v(\cdot)$. Early mechanisms for this problem relied on sealed-bid formats to ensure truthfulness~\cite{singer2010budget,chen2011approximability,leonardi2021budget,amanatidis2017budget,jalaly2021simple,amanatidis2019budget,neogi2024budget}; however, these designs are often criticized for their limited transparency and complex strategic implications. Subsequently, \cite{milgrom2020clock} introduced the descending clock auction format that offers obvious strategyproofness, a stronger and more intuitive notion of truthfulness. Following this, a series of studies have adopted the clock auction format to design budget-feasible mechanisms for submodular valuation maximization~\cite{balkanski2022deterministic,han2023triple,huang2023randomized,han2025efficient}. However, as discussed in Section~\ref{sc:challenge}, these existing mechanisms are unsuitable for welfare maximization, because they typically rely on the non-negativity and direct observability of the whole objective function, properties that do not hold for the welfare objective due to private costs. Concurrently and independently of our work, \cite{eden2026welfare} studies budget-feasible procurement for welfare maximization under additive valuations. To the best of our knowledge, no prior work has developed budget-feasible mechanisms with provable approximation guarantees for submodular welfare maximization in procurement auctions, regardless of auction format (sealed-bid or descending clock).
\section{Preliminaries}
We consider a procurement auction setting with a single auctioneer and a set
$\mathcal{N}$ of $n$ sellers, each owning a single item. The auctioneer is endowed with a valuation function $v: 2^{\mathcal{N}} \to \mathbb{R}_{\ge 0}$, where $v(S)$ represents the value the auctioneer derives from acquiring the items offered by the sellers $S \subseteq \mathcal{N}$. Each seller $u \in \mathcal{N}$ has a private cost $c(u) \ge 0$, representing the minimum payment required for providing its item to the auctioneer. We consider the valuation function $v(\cdot)$ to be normalized (i.e., $v(\emptyset) = 0$), submodular, and not necessarily monotone. A function $v(\cdot)$ is submodular if it satisfies the diminishing returns property: for all subsets $X \subseteq Y \subseteq \mathcal{N}$ and any element $u \in \mathcal{N} \setminus Y$, it holds that $v(u \mid Y) \le v(u \mid X)$, where $v(S \mid T) \triangleq v(S \cup T) - v(T)$ denotes the marginal contribution of a set $S$ with respect to $T$. For notational simplicity, we define $v(u) \triangleq v(\{u\})$ for any element $u\in\N$. For any subset $X \subseteq \mathcal{N}$, we let $c(X) \triangleq \sum_{u \in X} c(u)$ and $p(X) \triangleq \sum_{u \in X} p(u)$ denote the total cost and total payment, respectively. Additionally, for any integer $i \geq 1$, we denote the set $\{1, \dots, i\}$ by $[i]$.

Our objective is to design a computationally efficient mechanism that determines a winning seller set $S\subseteq \N$ and corresponding payment $p(S)$ to maximize the social welfare, defined as $v(S) - c(S).$ The mechanism must satisfy the following desired economic properties: (1) \textit{Truthfulness}; (2) \textit{Individual Rationality}; (3) \textit{Non-negative Auctioneer Surplus}, i.e., $v(S) - p(S) \geq 0$; and (4) \textit{Budget Feasibility}, i.e., $p(S) \le B$ for a given budget $B$. Note that the welfare objective $v(\cdot)-c(\cdot)$ preserves submodularity but may take negative values. Crucially, existing literature establishes that no polynomial-time algorithm can achieve a multiplicative approximation guarantee for maximizing a potentially negative submodular function, regardless of constraints~\cite{nikolakaki2021efficient}. Consequently, following prior work~\cite{nikolakaki2021efficient,sviridenko2017optimal,harshaw2019submodular,cui2023constrained,feldman2021guess}, we adopt a weaker approximation notion and aim to find a solution $S$ satisfying $v(S) - c(S) \geq \gamma_w \cdot v(O) - c(O)$ where $\gamma_w$ denotes the approximation ratio and $O$ denotes the optimal solution.
\section{Budget-Feasible Mechanisms for Welfare Maximization}
In this section, we present our budget-feasible mechanism for submodular welfare maximization in procurement auctions. Owing to its general framework, the mechanism can be adapted to valuation maximization via several modifications, which is presented in Section \ref{sc:valuation}. 

To adhere to space limitations and ensure a clear narrative flow, we defer the complete proofs of most lemmas and theorems to Appendix \ref{app:proof}, presenting only the key proof ideas in the main text.

\begin{algorithm}[ht!]
\caption{A Budget-Feasible Mechanism for Submodular Social Welfare Maximization (BFM-SWM)}
\label{alg:bfm}
\begin{algorithmic}[1]
\REQUIRE budget $B$, parameters $\alpha > 1, \beta > 1,\epsilon>0$, and number of candidate set sequences $\ell \in \{1, 2\}$
\STATE Let $R \leftarrow \{u\in \N\mid u~\text{accepts~price}~B\}$ and $p(u)\gets B$ for each $u\in R$\linelabel{ln:initprice}
\STATE Initialize $t \gets 0$, $\rho_t \leftarrow \epsilon/\alpha$, $u^*\gets\emptyset$, and $\operatorname{owner}(u)\gets 0$ for each $u\in R$\linelabel{ln:init}
\REPEAT \linelabel{ln:outerloop-1}
    \STATE $t \leftarrow t + 1$; $\rho_t \leftarrow \alpha \cdot \rho_{t-1}$; $S_{i,t} \gets\emptyset$ for each $i\in[\ell]$
    \FOR{$u \in R \setminus \left( \bigcup_{i=1}^\ell S_{i,t-1}\cup u^* \right)$}\linelabel{ln:innerloop-1}
       \IF{$\operatorname{owner}(u)\neq 0$}
            \STATE $j\leftarrow \operatorname{owner}(u)$
        \ELSE
            \STATE $j \leftarrow \arg\max_{i \in [\ell]} v(u \mid S_{i,t})$\linelabel{ln:greedy}
        \ENDIF
        \STATE Update $p(u) \leftarrow \min \left\{ p(u), \frac{v(u \mid S_{j,t})}{\beta + \rho_t/B} \right\}$\linelabel{ln:price}
        \IF{$u$ accepts price $p(u)$\linelabel{ln:accept}}
            \IF{$v(S_{j,t} \cup \{u\}) - p(S_{j,t} \cup \{u\}) > \rho_t$\linelabel{ln:break}}
                \STATE $u^*\gets \{u\}$; \textbf{break}
            \ELSE
                \STATE $S_{j,t} \leftarrow S_{j,t} \cup \{u\}$;\linelabel{ln:addelement}
                \IF{$\operatorname{owner}(u)=0$}
                    \STATE $\operatorname{owner}(u)\leftarrow j$
                \ENDIF
            \ENDIF
        \ELSE
            \STATE $R \leftarrow R \setminus \{u\}$\linelabel{ln:Rdelete};
        \ENDIF
    \ENDFOR\linelabel{ln:innerloop-2}
\UNTIL{$R \setminus \left( \bigcup_{i=1}^\ell (S_{i,t-1} \cup S_{i,t})\cup u^* \right)=\emptyset$}\linelabel{ln:outerloop-2}
\STATE $M \gets t$; $S^*\!\leftarrow\! \arg\max_{A \in \{S_{i,t} \mid i \in [\ell], t \in \{M\!-\!1, M\}\}\cup\{u^*\}}$\\$ \{v(A)\! - \!p(A)\}$\linelabel{ln:winning}
\OUTPUT $S^*$ and $p(u)$ for each $u\in S^*$
\end{algorithmic}
\end{algorithm}

\subsection{Mechanism Design}
As outlined in Algorithm \ref{alg:bfm}, our mechanism \ALG~operates as a multi-round descending clock auction that maintains multiple candidate seller sets to maximize social welfare. The mechanism first iterates through the entire set of sellers $\N$, offering each seller the full budget $B$. Sellers who accept this initial offer constitute the active seller set $R$ for the subsequent auction process (Line \ref{ln:initprice}). Then, the mechanism initializes the threshold $\rho_0$ as a small value $\epsilon/\alpha$~(where $\epsilon$ is an input error parameter), which will serve as a key factor in determining price offers in subsequent auction rounds. Additionally, the singleton candidate solution $u^*$, which captures critical individual sellers, is initialized to the empty set. The mechanism also initializes an owner index for each active seller, which is used to keep the different aggregate candidate sequences disjoint (Line~\ref{ln:init}).

Next, the mechanism enters a multi-round iterative auction phase that continues until all active sellers in $R$ have been either tentatively selected in the recent two rounds or have exited the auction (Lines \ref{ln:outerloop-1}-\ref{ln:outerloop-2}). At the beginning of each round $t$, the threshold $\rho_t$ is updated multiplicatively by a factor of $\alpha > 1$. This geometric increase is designed to progressively lower the prices offered to sellers, aiming to reduce payments closer to the sellers' private costs and thereby enhance social welfare. The specific value of $\alpha$ is determined in our theoretical analysis to optimize the approximation ratio. After initializing the current round's candidate sets $\{S_{i,t}\}_{i=1}^{\ell}$ to be empty, we proceed to iterate through the active sellers in $R$, excluding those who were already selected in the previous round $t-1$ (Lines \ref{ln:innerloop-1}-\ref{ln:innerloop-2}).

For each seller \(u\), the mechanism first determines which candidate sequence should process \(u\). To ensure that different aggregate candidate sequences are disjoint, the mechanism maintains an owner index for each active seller. Initially, every seller is unassigned. If \(u\) has already been assigned to a sequence in a previous round, then \(u\) is routed to its owner sequence. Otherwise, the mechanism greedily identifies the candidate set \(S_{j,t}\) to which \(u\) provides the largest marginal gain (Line~\ref{ln:greedy}). If an unassigned seller is accepted and added to \(S_{j,t}\), its owner is permanently set to \(j\) (Line~\ref{ln:addelement}). Based on the marginal gain $v(u \mid S_{j,t})$, the budget $B$, the current threshold $\rho_t$, and a value-to-payment ratio parameter $\beta$ (which ensures the value obtained by the auctioneer is at least $\beta$ times the payment and whose value is determined in our theoretical analysis to optimize the approximation ratio), the mechanism computes a new lower price (Line \ref{ln:price}). If a seller $u$ rejects the new offer, they permanently exit the auction (since prices only decrease) and are removed from the active set $R$ (Line \ref{ln:Rdelete}). If $u$ accepts, the mechanism checks a break condition: whether adding $u$ to $S_{j,t}$ would cause the auctioneer's surplus for that set to exceed the current threshold $\rho_t$ (Line \ref{ln:break}). If this condition is triggered, we store the element to $u^*$. The seller added to this singleton candidate receives temporal protection against descending price offers, preventing premature exit and preserving the procurement opportunity for the high-value item. The inner loop then terminates immediately for the current round. This break condition is crucial for ensuring budget feasibility, as detailed in the proof of Theorem \ref{lm:feasible}. If this condition is not triggered, $u$ is added to the candidate set $S_{j,t}$ (Line \ref{ln:addelement}).

Upon termination of the outer loop, the mechanism selects the seller set $S^*$ that maximizes welfare among all candidate sets generated in the last two rounds (including the singleton candidate solution $u^*$). It outputs $S^*$ as the winning set and assigns each seller $u \in S^*$ the most recently posted price $p(u)$.


\subsection{Theoretical Analysis}
In this section, we analyze the performance guarantees of the mechanism for general (not necessarily monotone) submodular valuation functions by setting the number of candidate set sequences to $\ell = 2$. We derive an improved approximation ratio for the special case of monotone submodular valuation functions with $\ell = 1$ in the following Section~\ref{sc:monotone}.

Since any seller whose cost exceeds the budget $B$ cannot belong to any feasible solution, such sellers are removed by Line~\ref{ln:initprice} of our mechanism. Hence, we do not consider these sellers in the subsequent analysis. 
Consider any candidate set $S_{i,t}$ during a round $t\in [M]$. For any seller $u$ processed in round $t$, we denote by $S_{i,t}^u$ the state of $S_{i,t}$ immediately before $u$ is processed in Line~\ref{ln:price}, for every $i\in[\ell]$. If $u$ is not processed in round $t$, we define $S_{i,t}^u=S_{i,t}$, namely, the final state of the set at the end of round $t$.

We first show that the proposed mechanism satisfies the standard economic properties and is computationally efficient.
\begin{theorem}\label{lm:feasible}
\ALG~is a budget-feasible mechanism that satisfies obvious strategyproofness (which implies truthfulness), individual rationality, and the non-negative auctioneer surplus condition. Moreover, it runs in \(\OO(n \log \frac{OPT}{\epsilon})\) time, where $OPT$ denotes the welfare of the optimal solution.
\end{theorem}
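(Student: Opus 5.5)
We prove the properties one at a time, most of them directly from the clock-auction structure of Algorithm~\ref{alg:bfm}.

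\textbf{Obvious strategyproofness and individual rationality.} First we check that \ALG\ is a descending clock auction. Each seller $u$ only ever faces take-it-or-leave-it offers $p(u)$. These offers are non-increasing over time because of the $\min$ in Line~\ref{ln:price} and the initial price $B$ in Line~\ref{ln:initprice}. A rejection is a permanent exit (Line~\ref{ln:Rdelete}). A seller who is never rejected and lies in $S^*$ is paid its last posted price. The offers depend only on the public quantities $v$, $B$, $\rho_t$, $\beta$ and on which sellers have accepted or exited, never on reported costs directly. By the characterization of \cite{milgrom2020clock}, every such clock auction is obviously strategyproof: the strategy ``accept iff $p(u)\ge c(u)$'' obviously dominates. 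One point needs care. A seller stored in $u^*$, or not processed in a round, keeps its price frozen, and a seller is never asked again after rejecting. Neither fact breaks monotonicity. Individual rationality follows because a truthful seller accepts only prices at least its cost, so every winner receives $p(u)\ge c(u)$.

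\textbf{Non-negative surplus and budget feasibility.} These rest on two invariants. The first concerns sets $S_{j,t}$ that are not $u^*$. For $u$ added with price $p(u)\le v(u\mid S^u_{j,t})/(\beta+\rho_t/B)$, telescoping marginals gives $v(S_{j,t})\ge(\beta+\rho_t/B)\,p(S_{j,t})\ge p(S_{j,t})$. Hence the surplus is non-negative. There is a subtlety here: the price is computed relative to $S_{j,t}^u$ in round $t$, and it must be the current price, which is $\le$ that bound because of the $\min$. So the inequality holds for the final prices. The second invariant is the break condition of Line~\ref{ln:break}. Every element that is added satisfies $v(S_{j,t}\cup\{u\})-p(S_{j,t}\cup\{u\})\le\rho_t$, so $v(S_{j,t})-p(S_{j,t})\le\rho_t$. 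Combining this with $v(S_{j,t})\ge(\beta+\rho_t/B)p(S_{j,t})$ gives
\[
(\beta-1+\rho_t/B)\,p(S_{j,t})\le\rho_t ,
\]
and therefore $p(S_{j,t})\le \rho_t/(\beta-1+\rho_t/B)\le B$, using $\beta>1$. For the singleton $u^*$, its price never exceeds $B$. Its surplus when it was stored was positive ($>\rho_t>0$), and its price is frozen afterwards, so both properties hold for it too. The output $S^*$ is one of these sets, so both properties transfer to it.

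\textbf{Running time.} Each round makes $\OO(n)$ value queries, at most $\ell\le 2$ per seller. It then remains to bound $M$ by $\OO(\log_\alpha(OPT/\epsilon))$. Take the last round $t$ in which some set survives with elements and the loop has not terminated. Some candidate has surplus comparable to $\rho_{t}$, or else the break fires with surplus $>\rho_t$. Surplus is at most $v(A)-c(A)$ by individual rationality, which is at most $OPT$ (up to the welfare notion). Hence $\rho_t\lesssim OPT$, i.e.\ $\epsilon\alpha^{t-1}\le OPT$ roughly. Once $\rho_t$ exceeds the maximum achievable value scaled by $\beta$, prices drop below every cost and all sellers exit. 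This bounds $M$.

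\textbf{Main obstacle.} The hardest step is making the termination bound rigorous. We must show that once $\rho_t$ is large relative to $OPT$, every remaining seller either rejects or is absorbed within two rounds, so the \textbf{until} condition holds. The plan is to argue that a seller accepting $p(u)\le v(u\mid S)/(\beta+\rho_t/B)$ with $\rho_t\gg\max_u v(u)$ forces $p(u)<c(u)$ whenever $c(u)>0$. Zero-cost sellers need separate handling via the break threshold.
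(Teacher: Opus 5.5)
\textbf{Economic properties.} Your arguments for obvious strategyproofness, individual rationality, budget feasibility and non-negative surplus are essentially the paper's. Summing the price rule against the break condition gives $p(S_{i,t})\le B$ and positive surplus, and the clock-auction format gives obvious strategyproofness and individual rationality. One imprecision concerns $u^*$. At the break, what exceeds $\rho_t$ is the surplus of $S_{j,t}\cup\{u\}$, not of $\{u\}$. Non-negativity of $v(u^*)-p(u^*)$ should instead come from $v(u^*)\ge v(u^*\mid S_{j,t}^{u^*})\ge(\beta+\rho_t/B)\,p(u^*)$, as in Lemma~\ref{lm:betawelfare}.

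\textbf{Running time: the genuine gap.} You leave this part as a sketch, and neither of your routes works.

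\emph{Missing observation.} Every round other than the last must end with a break. If the inner loop of round $t$ completes without breaking, $u^*$ is unchanged. Every seller it iterates over is either removed from $R$ or added to some $S_{j,t}$. So the \textbf{until} condition holds and $t=M$. Your first route instead posits that ``some candidate has surplus comparable to $\rho_t$'', which has no justification: non-break candidates satisfy only an upper bound of $\rho_t$.

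\emph{Bounding the break set.} Since round $M-1$ ended with a break, there are a candidate $S'$ and an element $u'$ with $v(S'\cup\{u'\})-p(S'\cup\{u'\})>\rho_{M-1}=\epsilon\alpha^{M-2}$. Your step ``surplus $\le v(A)-c(A)\le OPT$'' cannot be applied to $S'\cup\{u'\}$ directly, because that set need not be budget-feasible. The paper splits it using subadditivity and $c\le p$:
$v(S'\cup\{u'\})-p(S'\cup\{u'\})\le\bigl(v(S')-c(S')\bigr)+\bigl(v(u')-c(u')\bigr)\le 2\,OPT.$
Both pieces are feasible. We have $c(S')\le p(S')\le B$ by your own budget argument, which applies because no break fired while $S'$ was built. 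We also have $c(u')\le p(u')\le B$. Hence $\epsilon\alpha^{M-2}<2\,OPT$, so $M\le 2+\lceil\log_\alpha(2\,OPT/\epsilon)\rceil$. With $\OO(n)$ work per round, the stated time follows.

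\emph{Why your fallback fails.} Your plan is to raise $\rho_t$ until every posted price drops below every positive cost. Since $p(u)<B\,v(u)/\rho_t$, this needs $\rho_t$ of order $B\,v(u)/c(u)$. The resulting round count is of order $\log\bigl(B\max_u v(u)/\min_{c(u)>0}c(u)\bigr)$, which is unrelated to $OPT/\epsilon$ and can be arbitrarily large. For example, take one seller with $v(u)=1$, tiny $c(u)$ and $B=1$; the mechanism actually stops after a single round via the break. That route also cannot handle zero-cost sellers, who never reject. The break-based argument above needs no rejections at all.
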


We next analyze the approximation ratio of the mechanism. To this end, we need to upper bound the welfare of the optimal solution $O$ using the solution output by our mechanism. The main challenge lies in accounting for the welfare loss incurred by the elements in $O$ that are not included in any of our candidate sets $\{ S_{i,t} \}_{t=1}^M$. We address this by first partitioning such elements according to the reasons for their exclusion, as detailed below. 
\begin{definition}\label{def:opt}
Let $U_i=\bigcup_{t=1}^{M}S_{i,t}$ for $i\in\{1,2\}$. We partition the set of elements in the optimal solution $O$ that are not included in $U_1$, i.e., $O\setminus U_1$, into the following three categories.

\begin{enumerate}
    \item \textbf{Elements first added to the other aggregate candidate sequence $U_2$.} For each round $t$, let $O_{2,t}^{>}$ denote the subset of elements in $O\setminus U_1$ whose first successful insertion into $U_2$ occurs in round $t$.

    \item \textbf{Elements rejected before any successful insertion.} These elements have not been successfully inserted into either aggregate candidate sequence before they reject an offered price. For each round $t$, let $O_{1,t}^{<}$ and $O_{2,t}^{<}$ denote the subsets of elements in $O\setminus (U_1\cup U_2)$ that reject the offered price in round $t$ when they are considered for $S_{1,t}$ and $S_{2,t}$, respectively.

    \item \textbf{The final singleton element.} Let
    $O^*=(u^*\cap O)\setminus (U_1\cup U_2)$
    denote the final singleton element, if it belongs to $O$ and has not been successfully inserted into either aggregate candidate sequence.
\end{enumerate}
Based on the above, we obtain the complete partition
$$O\setminus U_1
=
\bigcup_{t=1}^{M}
\left(
O_{1,t}^{<}\cup O_{2,t}^{<}\cup O_{2,t}^{>}
\right)
\cup O^*.$$
By symmetry, $O\setminus U_2$ can be partitioned as
$$
O\setminus U_2
=
\bigcup_{t=1}^{M}
\left(
O_{2,t}^{<}\cup O_{1,t}^{<}\cup O_{1,t}^{>}
\right)
\cup O^*.
$$
\end{definition}

Using this partition, we can derive an upper bound on a scaled welfare expression of the optimal solution, as shown in Lemma~\ref{lm:total}. The proof relies on two key insights: (1) Elements in $O_{i,t}^{<}~(i\in [\ell])$ rejected the price offers determined by their marginal gain, the budget, the threshold, and the value-to-payment ratio parameter, which implies that their ``cost-efficiency'' (i.e., the ratio of marginal welfare to cost) is insufficient relative to the threshold. Thus, the total welfare loss caused by excluding these elements can be bounded by the final threshold $\rho_M$. (2) Elements in $O_{i,t}^{>}~(i\in [\ell])$ were not selected for a specific candidate set because, under the greedy strategy (Line \ref{ln:greedy}), they contributed more significant marginal value to a ``sibling'' candidate set. Therefore, the welfare loss caused by excluding these elements can be directly charged to the welfare accumulated by the corresponding sibling candidate sets.
\begin{lemma}\label{lm:total}
We have $v(O)-2\beta\cdot c(O)\leq 2\rho_M+2\sum_{i=1}^\ell\sum_{t=1}^{M}v(S_{i,t})+2 v(u^*)$.
\end{lemma}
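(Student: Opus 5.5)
We will prove the bound by working with the two aggregate sequences $U_1=\bigcup_t S_{1,t}$ and $U_2=\bigcup_t S_{2,t}$, which are disjoint thanks to the owner index. First, non-negativity of $v$ together with submodularity gives
\[
v(O)\le v(O\cup U_1)+v(O\cup U_2),
\]
because $(O\cup U_1)\cap(O\cup U_2)=O$. It therefore suffices to show, for each $i\in\{1,2\}$,
\[
v(O\cup U_i)-\beta c(O)\le \rho_M+\sum_{i'}\sum_t v(S_{i',t})+v(u^*).
\]
Summing the two resulting inequalities gives the lemma, with the factors $2$ coming from this doubling. We then expand
\[
v(O\cup U_i)\le v(U_i)+\sum_{o\in O\setminus U_i} v(o\mid U_i),
\]
and bound $v(U_i)\le\sum_t v(S_{i,t})$ by subadditivity, which follows from submodularity and normalization. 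The marginals are handled separately for each class of the partition in Definition~\ref{def:opt}. The singleton $O^*$ contributes at most $v(u^*)$.

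For $o\in O_{i,t}^{<}$ (rejected while considered for $S_{i,t}$), rejection of the price from Line~\ref{ln:price} means
\[
c(o)>\frac{v(o\mid S_{i,t}^o)}{\beta+\rho_t/B},
\]
or else $c(o)>p(o)$ for an earlier, larger price; in that case we pass to the round in which the price last dropped, where the same inequality holds with a round $t'\le t$. Since $S_{i,t}^o\subseteq U_i$, submodularity gives
\[
v(o\mid U_i)\le v(o\mid S_{i,t}^o)< \beta c(o)+\rho_t\, c(o)/B.
\]
Summing over all rejected elements of $O$, and using $\rho_t\le\rho_M$ and $c(O)\le B$ (we may assume $O$ is budget-feasible, or else restrict attention to that regime), the total is at most $\beta c(O)+\rho_M$. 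For elements of the other index $O_{3-i,t}^{<}$, which are processed for $S_{3-i,t}$ rather than $S_{i,t}$, the greedy rule in Line~\ref{ln:greedy} gives
\[
v(o\mid S_{3-i,t}^o)\ge v(o\mid S_{i,t}^o)\ge v(o\mid U_i),
\]
so the same rejection bound applies with the roles of the indices swapped.

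For $o\in O_{3-i,t}^{>}$ (first inserted into $U_{3-i}$ in round $t$), the greedy choice gives $v(o\mid S_{i,t}^o)\le v(o\mid S_{3-i,t}^o)$. Summing these marginals over the insertions into $S_{3-i,t}$ telescopes to at most $v(S_{3-i,t})$, using that marginals of inserted elements sum exactly to the final value. Hence the $O^{>}$ terms are charged to $\sum_t v(S_{3-i,t})$, which together with $\sum_t v(S_{i,t})$ yields the double sum over $i'$.

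The main obstacle is that the telescoping marginals $v(o\mid S_{3-i,t}^o)$ may be negative for non-monotone $v$. We also need the rejected elements' $\rho_t c(o)/B$ terms not to double count across the two expansions. We will handle the first issue by noting that accepted elements satisfy $p(o)\ge 0$, so their marginal is nonnegative, since the price is at most $v(o\mid S)/(\beta+\rho_t/B)$ and prices are accepted only when nonnegative. The second issue is handled by the fact that each rejected element appears once per expansion, so the total is $2\beta c(O)+2\rho_M$, matching the constants in the statement.
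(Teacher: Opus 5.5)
Your proposal is correct and follows essentially the same route as the paper's proof. The shared steps are: disjointness of $U_1,U_2$ from the owner rule; expanding $v(O\cup U_i)$ over the partition of Definition~\ref{def:opt}; summing the rejection bound to $\beta c(O)+\rho_M$; charging $O^{>}_{3-i,t}$ to $\sum_t v(S_{3-i,t})$ via the greedy rule and nonnegative inserted marginals; and finally combining $v(O)\le v(O\cup U_1)+v(O\cup U_2)$ with subadditivity.

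Two small remarks. First, your ``earlier, larger price'' case is vacuous: $o$ had already accepted $p_{old}$ (or $B$ in Line~\ref{ln:initprice}), so $c(o)\le p_{old}$, and rejecting $\min\{p_{old},\cdot\}$ directly forces $c(o)>v(o\mid S^{o}_{i,t})/(\beta+\rho_t/B)$. Second, the greedy comparison for $O^{<}_{3-i,t}$ is valid precisely because these elements were never inserted and therefore still have owner $0$; you should state this explicitly.
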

To establish the approximation ratio, it now suffices to bound the right-hand side of the inequality in the above lemma. We first address the trivial case where the mechanism terminates within one round, i.e., $M\le 1$. In this case, we can directly derive a simple upper bound, as stated in Lemma~\ref{lm:trivial}.
\begin{lemma}\label{lm:trivial}
    If $M\leq 1$, then $v(O)-2\beta\cdot c(O)\leq 2\epsilon+2\sum_{i=1}^\ell v(S_{i,1})+2v(u^*)$.
\end{lemma}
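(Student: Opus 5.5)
Since Lemma~\ref{lm:total} is stated for arbitrary $M$, the natural plan is to specialize it. It gives
$$v(O)-2\beta c(O)\le 2\rho_M+2\sum_{i=1}^\ell\sum_{t=1}^{M}v(S_{i,t})+2v(u^*).$$
We then show that, when $M\le 1$, the right-hand side collapses to the claimed bound.

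\textbf{Case $M=1$.} The threshold is $\rho_M=\rho_1=\alpha\cdot(\epsilon/\alpha)=\epsilon$. The double sum contains only the round-$1$ term, namely $\sum_{i=1}^\ell v(S_{i,1})$. Substituting these two facts into Lemma~\ref{lm:total} yields exactly $2\epsilon+2\sum_{i}v(S_{i,1})+2v(u^*)$.

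\textbf{Case $M=0$.} This is the remaining possibility for $M\le1$. Because the loop is a repeat–until, it runs at least once, so $M\ge 1$ whenever the mechanism enters the loop. The only degenerate situation is $R=\emptyset$, and even then the loop executes once with empty sets. So $M=0$ does not actually occur, or, if we adopt the convention that $M=0$ means no active sellers, we argue it directly. In that situation every $u\in O$ has $c(u)>B$, since it rejected price $B$ in Line~\ref{ln:initprice}. Then $v(O)-2\beta c(O)\le \sum_{u\in O}\bigl(v(u)-2\beta c(u)\bigr)$ by submodularity and normalization. Each summand is at most $v(u)-2\beta B$, and this is nonpositive provided $v(u)\le 2\beta B$. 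That last inequality is not automatic, so this is the point needing care.

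The main obstacle is therefore justifying the claim in Lemma~\ref{lm:total} at $M=1$ without any hidden reliance on $M\ge2$. For example, its proof might charge rejected elements $O^{<}_{i,t}$ using $\rho_{t}\le\rho_M$, or use the termination condition involving $S_{i,t-1}$ with $S_{i,0}=\emptyset$. I would recheck each of these steps with $t=1$ and $S_{i,0}=\emptyset$.

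If Lemma~\ref{lm:total} cannot simply be invoked, I would redo its argument directly for one round.
\begin{itemize}[leftmargin=*]
\item \emph{Rejected elements.} Each $u\in O^{<}_{i,1}$ rejected the price $v(u\mid S^u_{i,1})/(\beta+\epsilon/B)$. Hence $v(u\mid S_{i,1})\le v(u\mid S^u_{i,1})<(\beta+\epsilon/B)c(u)\le\beta c(u)+\epsilon c(u)/B$.
\item \emph{Summing over rejected elements.} Summing and using $c(O)\le B$ for a budget-feasible benchmark bounds their total contribution by $\beta c(O)+\epsilon$.
\item \emph{Elements sent to the sibling sequence.} Elements in $O^{>}_{2,1}$ are charged to $v(S_{2,1})$ via the greedy rule, as in Lemma~\ref{lm:total}.
\item \emph{Singleton element.} $O^*$ contributes at most $v(u^*)$.
\item \emph{Combining.} Apply submodularity as $v(O)\le v(O\cup S_{i,1})+v(O\cap S_{i,1})$, twice to handle nonmonotonicity, over $i=1,2$, and add the two inequalities. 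This produces the factors of $2$ and the term $2\epsilon$.
\end{itemize}
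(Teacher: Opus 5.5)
Your main argument is correct and matches the paper's intended route. The paper gives no separate proof because Lemma~\ref{lm:trivial} is just Lemma~\ref{lm:total} at $M=1$ with $\rho_1=\alpha\cdot\epsilon/\alpha=\epsilon$; $M=0$ is indeed impossible since $t$ is incremented before the first round. The proof of Lemma~\ref{lm:total} uses only $\rho_t\le\rho_M$, $c(O)\le B$, the greedy and owner rules, and the termination condition (with $S_{i,0}=\emptyset$) for completeness of the partition, none of which needs $M\ge 2$.

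Your fallback is therefore unnecessary, and its combining step is false as written. The inequality $v(O)\le v(O\cup S)+v(O\cap S)$ fails for non-monotone $v$: a single-edge cut function with $O=\{a\}$, $S=\{b\}$ gives $1\le 0$. The correct step is $v(O\cup S_{1,1})+v(O\cup S_{2,1})\ge v(O\cup S_{1,1}\cup S_{2,1})+v(O)\ge v(O)$, which uses $S_{1,1}\cap S_{2,1}=\emptyset$.
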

We now turn to the non-trivial case where $M\geq 2$. Our analysis begins by deriving an upper bound on the final threshold $\rho_M$, as presented in Lemma \ref{lm:boundrho}. This result fundamentally relies on the break condition enforced in Line~\ref{ln:break} of the mechanism.
\begin{lemma}\label{lm:boundrho}
    If $M\geq 2$, then $\rho_M \leq 2\alpha\left(v(S^*) - p(S^*)\right)$.
\end{lemma}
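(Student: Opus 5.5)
The plan is to show that round $M-1$ must have ended through the break in Line~\ref{ln:break}. The surplus that triggered that break then splits, by submodularity, between the candidate set involved and the singleton $u^*$. One of them therefore has surplus at least $\rho_{M-1}/2=\rho_M/(2\alpha)$, and $S^*$ is at least as good.

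\textbf{Step 1: every round $t<M$ ends with a break.} Suppose round $t$ runs its inner loop (Lines~\ref{ln:innerloop-1}--\ref{ln:innerloop-2}) to completion without breaking. Every seller processed in round $t$ is then either removed from $R$ (Line~\ref{ln:Rdelete}) or added to some $S_{j,t}$ (Line~\ref{ln:addelement}). The only sellers not processed are those in $\bigcup_i S_{i,t-1}\cup u^*$. Hence $R\setminus(\bigcup_{i}(S_{i,t-1}\cup S_{i,t})\cup u^*)=\emptyset$, and the condition in Line~\ref{ln:outerloop-2} stops the loop, so $t=M$. Since $M\ge 2$, round $M-1$ therefore terminated by a break.

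\textbf{Step 2: splitting the surplus of a break.} Take any round $t$ ending in a break, triggered by a seller $u$ with chosen index $j$, so that $u^*=\{u\}$ is set at that moment. Submodularity and $v(\emptyset)=0$ give $v(S_{j,t}\cup\{u\})\le v(S_{j,t})+v(u)$. Payments are additive. Therefore
\[
\rho_t< v(S_{j,t}\cup\{u\})-p(S_{j,t}\cup\{u\})\le \bigl(v(S_{j,t})-p(S_{j,t})\bigr)+\bigl(v(u)-p(u)\bigr),
\]
and one of $S_{j,t}$ or $\{u\}$ has surplus greater than $\rho_t/2$.

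\textbf{Step 3: the sets and prices survive to the end.} Apply Step~2 with $t=M-1$. We need the relevant set and its prices to still be available when Line~\ref{ln:winning} is executed.
\begin{itemize}
\item $S_{j,M-1}$ is among the final candidates, since $t\in\{M-1,M\}$. Its members are excluded from the round-$M$ loop, so their prices $p(\cdot)$ are not lowered after round $M-1$. Its surplus therefore stays above $\rho_{M-1}/2$.
\item If round $M$ does not break, $u^*$ is unchanged. Since $u^*$ is also excluded from the round-$M$ loop, its price is frozen, and its surplus stays above $\rho_{M-1}/2$.
\item If round $M$ does break, then $u^*$ is overwritten. Applying Step~2 with $t=M$ instead shows that $S_{j',M}$ or the new $u^*$ has surplus greater than $\rho_M/2\ge\rho_{M-1}/2$. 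Both of these are final candidates.
\end{itemize}
In every case some candidate considered in Line~\ref{ln:winning} has surplus at least $\rho_{M-1}/2$. Since $S^*$ maximizes $v(A)-p(A)$ over these candidates,
\[
v(S^*)-p(S^*)\ge \rho_{M-1}/2=\rho_M/(2\alpha).
\]

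The main obstacle is Step~1. It requires checking carefully that the termination test exactly captures a round that completes without a break, including the bookkeeping of $u^*$ and of elements rejected and removed from $R$. A secondary point is verifying in Step~3 that the prices of the witnessing sets are not lowered later, so that the surplus bound still holds for the final prices.
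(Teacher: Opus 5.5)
Your proof is correct and is essentially the paper's argument. Round $M-1$ must end in a break, since otherwise the termination test would already have fired. The surplus exceeding $\rho_{M-1}$ at that break then splits by submodularity between the triggering candidate set and the singleton $u^*$, and both are compared when $S^*$ is chosen.

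Your explicit check that the relevant prices are frozen after the break is a detail the paper leaves implicit. The round-$M$-break case that you and the paper both handle is actually vacuous: if round $M$ broke, the singleton from round $M-1$ would remain in $R$ outside every listed set and force another round.
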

\begin{proof}
Consider round $M-1$. At least one of the candidate sets $\{S_{i,M-1}\}_{i=1}^\ell$ must have triggered the break instruction upon the evaluation of some element; otherwise, we would have $R \setminus \left( \bigcup_{i=1}^\ell (S_{i,M-2} \cup S_{i,M-1})\cup u^* \right) = \emptyset$, which would imply that round $M$ is never reached, contradicting the definition of $M$. Therefore, $u^*$ must be non-empty. Let \( S' \) denote the candidate set that triggered the break condition with the final element in $u^*$. Regardless of whether this occurred in round \( M-1 \) or \( M \), the break condition implies $\rho_{M}/\alpha=\rho_{M-1}\leq v(S' \cup \{u^*\}) - p(S' \cup \{u^*\})\leq  v(S') - p(S') + v(u^*) - p(u^*)$ due to submodularity. Using $S^* \leftarrow \arg\max_{A \in \{S_{i,t} \mid i \in [\ell], t \in \{M-1, M\}\}\cup\{u^*\}} \{v(A) - p(A)\}$ completes the proof.



 
\end{proof}

We next proceed to bound the remaining terms on the right-hand side of the inequality in Lemma \ref{lm:total}, as established in Lemma \ref{lm:boundS}. The proof leverages three key design features of our mechanism: (1) the value-to-payment ratio parameter $\beta$ ensures that the valuation of any candidate solution is at least $\beta$ times its payment, which allows us to bound its valuation by its welfare (as shown by Lemma \ref{lm:betawelfare}); (2) due to the break condition in Line~\ref{ln:break}, the welfare of each candidate solution in a given round is upper bounded by the corresponding threshold; and (3) the thresholds grow geometrically by a factor of $\alpha$, which implies that the sum of thresholds across all rounds is dominated by the final term $\rho_M$, which we have already bounded in Lemma \ref{lm:boundrho}.
\begin{lemma}\label{lm:betawelfare}
For any candidate solution $S_{i,t}$ ($i\in[\ell], t\in[M]$) and the singleton 
set $u^*$, we have $v(A) \le \frac{v(A) - p(A)}{1 - 1/\beta} \le \frac{v(A) - c(A)}{1 - 1/\beta}$ for $A \in \{S_{i,t}\} \cup \{u^*\}$.
\end{lemma}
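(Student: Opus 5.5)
The plan is to reduce the claim to a single inequality, $p(A)\le v(A)/\beta$ for every $A\in\{S_{i,t}\}\cup\{u^*\}$. Once this holds, we get $v(A)-p(A)\ge (1-1/\beta)v(A)$, and since $\beta>1$ we may divide to obtain the first inequality. The second inequality, $v(A)-p(A)\le v(A)-c(A)$, follows from individual rationality: each seller in $A$ accepted its current price $p(u)$, so $p(u)\ge c(u)$, and summing over $A$ gives $p(A)\ge c(A)$. Here $p(u)$ is the most recently posted price, and it is the price the seller accepted when it entered $A$.

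For $A=S_{i,t}$, order its elements $u_1,\dots,u_k$ by the time they were inserted in round $t$. When $u_r$ is processed, it is compared against $S_{i,t}^{u_r}=\{u_1,\dots,u_{r-1}\}$, and Line~\ref{ln:price} sets
\[
p(u_r)\le \frac{v(u_r\mid S_{i,t}^{u_r})}{\beta+\rho_t/B}\le \frac{v(u_r\mid \{u_1,\dots,u_{r-1}\})}{\beta}.
\]
This uses the owner index: $u_r$ is routed to sequence $i$, so $j=i$. The price is then the accepted one, and it is not lowered again within round $t$ after insertion. In later rounds prices could drop further, but we only need the value of $p(u_r)$ at the end, which can only be smaller. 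Telescoping gives $p(A)\le \frac1\beta\sum_r v(u_r\mid\{u_1,\dots,u_{r-1}\})=v(A)/\beta$.

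One subtlety is whether the final $p(u_r)$ might be the price from a later round, when $u_r$ was considered for a different set. This is harmless, since prices only decrease and the bound above then holds a fortiori. Another point to check is that the marginal gain could be negative. In that case the price would be negative, and a seller with $c\ge 0$ rejects it, so accepted elements have nonnegative marginal gain and the telescoping remains valid.

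For $A=u^*=\{u\}$, the element $u$ accepted the price from Line~\ref{ln:price} against some set $S_{j,t}$, so $p(u)\le v(u\mid S_{j,t})/\beta\le v(u)/\beta$ by submodularity. The main obstacle is verifying that $p(u)$ is not later reset upward, and that $u^*$ is excluded from later price updates (Line~\ref{ln:innerloop-1}), so the bound persists. A replaced $u^*$ is simply a new singleton to which the same argument applies.
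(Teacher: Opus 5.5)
Your proof is correct and follows essentially the same route as the paper. In both, the per-element bound $c(u)\le p(u)\le v(u\mid S_{i,t}^u)/(\beta+\rho_t/B)$, which also forces the marginal gain to be nonnegative, is telescoped over $S_{i,t}$ to give $v(A)\ge\beta\,p(A)\ge\beta\,c(A)$; submodularity $v(u^*\mid S_{j,t}^{u^*})\le v(u^*)$ handles the singleton, and one then rearranges using $\beta>1$.

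The one thing to tighten is that $p(A)$ must be read as the prices at which the elements were accepted into $A$, as the paper implicitly does and as Lemma~\ref{lm:boundS} needs for $v(S_{i,t})-p(S_{i,t})\le\rho_t$. Your ``a fortiori'' remark about lower prices from later rounds is fine for the first inequality but not for $p(A)\ge c(A)$: a member of $S_{i,t}$ with $t\le M-2$ can be offered a lower price in round $t+2$, reject it, and end with a final posted price below its cost.
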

 \begin{proof}
      Consider any element $u$ that accepts the price offered by $S_{i,t}~(t \in [M],i \in [\ell])$. By the pricing rule in Line \ref{ln:price}, we have $c(u) \leq p(u) \leq \frac{v(u \mid S_{i,t}^u)}{\beta + \rho_t / B}$. Rearranging this inequality yields $v(u \mid S_{i,t}^u) - \beta \cdot p(u) \geq \frac{\rho_t}{B} \cdot p(u) \geq 0$. Summing over all elements in $S_{i,t}$ yields
$
v(S_{i,t}) - \beta \cdot p(S_{i,t}) \ge 0,
$
and hence
$
v(S_{i,t}) \ge \beta \cdot p(S_{i,t}) \ge \beta \cdot c(S_{i,t}).
$
This yields $v(S_{i,t})
\le \frac{v(S_{i,t}) - p(S_{i,t})}{1 - 1/\beta}
\le \frac{v(S_{i,t}) - c(S_{i,t})}{1 - 1/\beta}.$ We can use similar reasoning to get $v(u^*)
\le \frac{v(u^*) - p(u^*)}{1 - 1/\beta}
\le \frac{v(u^*) - c(u^*)}{1 - 1/\beta}$ as $v(u^*)\geq v(u^* \mid S_{i,t}^{u^*})$.
 \end{proof}

\begin{lemma}\label{lm:boundS} 
It holds that $\sum_{i=1}^\ell\sum_{t=1}^{M}v(S_{i,t}) \leq \frac{2\ell\cdot\alpha\left(v(S^*) - p(S^*)\right)}{(\alpha-1)(1-1/\beta)}$.
\end{lemma}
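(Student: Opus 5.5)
We will bound each $v(S_{i,t})$ by the threshold of its round and then sum the thresholds geometrically, closing with Lemma~\ref{lm:boundrho}. The first step is to show that every candidate set satisfies
\[
v(S_{i,t})-p(S_{i,t})\le \rho_t \quad \text{for all } i\in[\ell],\ t\in[M].
\]
This follows from the break condition in Line~\ref{ln:break}. An element $u$ is added to $S_{j,t}$ in Line~\ref{ln:addelement} only if $v(S_{j,t}\cup\{u\})-p(S_{j,t}\cup\{u\})\le\rho_t$. Since $S_{i,t}$ is built only through such insertions and starts empty, its final state satisfies the inequality. One subtlety needs checking. Prices of elements already in $S_{i,t}$ are never modified later within round $t$, because each element is processed once per round. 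Hence $p(S_{i,t})$ at the end of the round equals the value tested at the time of the last insertion. If prices of members dropped afterwards in later rounds, we would need to interpret $p$ as the price posted when the element was inserted; I will note this convention.

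Next, apply Lemma~\ref{lm:betawelfare}. It gives $v(S_{i,t})\le \frac{v(S_{i,t})-p(S_{i,t})}{1-1/\beta}\le \frac{\rho_t}{1-1/\beta}$. Summing over $i\in[\ell]$ and $t\in[M]$, and using $\rho_t=\rho_M\alpha^{t-M}$, we get
\[
\sum_{i=1}^\ell\sum_{t=1}^M v(S_{i,t})\le \frac{\ell}{1-1/\beta}\sum_{t=1}^M\rho_M\alpha^{t-M}\le \frac{\ell\,\rho_M}{1-1/\beta}\cdot\frac{\alpha}{\alpha-1}.
\]
For $M\ge2$, Lemma~\ref{lm:boundrho} gives $\rho_M\le 2\alpha(v(S^*)-p(S^*))$. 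Substituting this would yield $\frac{2\ell\alpha^2(v(S^*)-p(S^*))}{(\alpha-1)(1-1/\beta)}$, which is worse than the claimed bound by a factor of $\alpha$.

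To remove the extra $\alpha$, I will not convert the round-$M$ sets via $\rho_M$. Instead, I will use that $S^*$ maximizes $v(A)-p(A)$ over the last two rounds. For $t\in\{M-1,M\}$ this gives directly $v(S_{i,t})\le\frac{v(S^*)-p(S^*)}{1-1/\beta}$. For $t\le M-2$, I will use $\rho_t\le\rho_{M-1}\alpha^{t-M+1}$ together with $\rho_{M-1}=\rho_M/\alpha\le 2(v(S^*)-p(S^*))$. Then
\[
\sum_{t\le M-2}\rho_t\le \rho_{M-1}\cdot\frac{1}{\alpha-1}\le \frac{2(v(S^*)-p(S^*))}{\alpha-1}.
\]
Adding the two last-round terms gives the total bound
\[
\frac{\ell\,(v(S^*)-p(S^*))}{1-1/\beta}\Bigl(2+\frac{2}{\alpha-1}\Bigr)=\frac{2\ell\alpha\,(v(S^*)-p(S^*))}{(\alpha-1)(1-1/\beta)},
\]
which is exactly the claim.

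The case $M\le1$ needs separate handling, since Lemma~\ref{lm:boundrho} does not apply there. In that case there is only one round, and all its sets compete in the $\arg\max$ defining $S^*$, so $\sum_i v(S_{1,\cdot})\le \frac{\ell(v(S^*)-p(S^*))}{1-1/\beta}$. This is already below the bound since $\frac{2\alpha}{\alpha-1}\ge1$.

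The main obstacle is the first step: rigorously verifying that the final $v(S_{i,t})-p(S_{i,t})$ is at most $\rho_t$ given how prices evolve. It is also worth double-checking that the refined split (last two rounds versus earlier rounds) is what produces the factor $2\alpha/(\alpha-1)$ rather than $2\alpha^2/(\alpha-1)$.
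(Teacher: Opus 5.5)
Your final argument is correct and is essentially the paper's proof. Like the paper, you bound rounds $M-1$ and $M$ via the $\arg\max$ defining $S^*$ together with Lemma~\ref{lm:betawelfare}, bound earlier rounds by $\rho_t/(1-1/\beta)$ from the break condition, sum these geometrically to at most $\rho_{M-1}/(\alpha-1)$, and close with $\rho_{M-1}=\rho_M/\alpha\le 2(v(S^*)-p(S^*))$ from Lemma~\ref{lm:boundrho}. Your explicit handling of $M\le 1$ and your convention that $p(S_{i,t})$ uses the prices posted at insertion in round $t$ are small clarifications that the paper leaves implicit, not departures from its route.
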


Combining Lemmas~\ref{lm:total}--\ref{lm:boundS}, we are now ready to establish the approximation guarantee of our mechanism~\ALG.
\begin{theorem}\label{thm:swmratio-non}
By setting $\ell=2$, $\alpha=1+\frac{2\sqrt{6}}{3}$, and $\beta=4$, \ALG~outputs a winning set $S^*$ satisfying $v(S^*)-c(S^*)\geq 0.0328\cdot v(O)-c(O)-\epsilon/4 $ for the general (not necessarily monotone) submodular valuation function $v(\cdot)$, where $\epsilon>0$.
\end{theorem}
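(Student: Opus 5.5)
The plan is to combine Lemmas~\ref{lm:total}--\ref{lm:boundS} into a single inequality of the form $v(O)-2\beta\cdot c(O)\le K\cdot W+(\text{additive }\epsilon\text{ term})$, where $W:=v(S^*)-p(S^*)$. We then convert it into the claimed guarantee using $v(S^*)-c(S^*)\ge W\ge 0$. The first inequality holds by individual rationality ($p\ge c$). Nonnegativity of $W$ follows from Lemma~\ref{lm:betawelfare}, since every candidate $A$ has $v(A)-p(A)\ge (1-1/\beta)v(A)\ge 0$. Throughout we fix $\ell=2$ and $\beta=4$, so $1-1/\beta=3/4$ and $2\beta=8$. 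We split into the cases $M\le 1$ and $M\ge 2$.

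\textbf{Case $M\ge 2$.} Start from Lemma~\ref{lm:total}:
\[
v(O)-8c(O)\le 2\rho_M+2\sum_{i=1}^{2}\sum_{t=1}^M v(S_{i,t})+2v(u^*).
\]
Bound $2\rho_M\le 4\alpha W$ by Lemma~\ref{lm:boundrho}. Bound the double sum by Lemma~\ref{lm:boundS}, which gives $2\cdot\frac{4\alpha W}{(\alpha-1)(3/4)}=\frac{32}{3}\cdot\frac{\alpha}{\alpha-1}W$. Bound $2v(u^*)\le \frac{2}{3/4}W=\frac{8}{3}W$ by Lemma~\ref{lm:betawelfare}, using that $u^*$ is one of the candidates in the $\arg\max$ defining $S^*$, so $v(u^*)-p(u^*)\le W$.

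Writing $\frac{\alpha}{\alpha-1}=1+\frac{1}{\alpha-1}$ and $x=\alpha-1$, we get $K(\alpha)=4+\frac{32}{3}+\frac{8}{3}+4x+\frac{32}{3x}$. This is minimized at $x^2=8/3$, i.e.\ $\alpha=1+\frac{2\sqrt6}{3}$, which is exactly the prescribed choice. There
\[
K=\tfrac{52}{3}+2\sqrt{128/3}\approx 30.40,
\]
so $1/K\ge 0.0328$.

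Hence $W\ge \frac{1}{K}v(O)-\frac{8}{K}c(O)$. Since $8/K<1$ and $c(O)\ge 0$, this gives $v(S^*)-c(S^*)\ge W\ge 0.0328\,v(O)-c(O)$, which is stronger than claimed because there is no $\epsilon$ loss in this case.

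\textbf{Case $M\le 1$.} Use Lemma~\ref{lm:trivial}:
\[
v(O)-8c(O)\le 2\epsilon+2\sum_{i=1}^2 v(S_{i,1})+2v(u^*).
\]
Each of the three sets $S_{1,1},S_{2,1},u^*$ is a candidate for $S^*$ (in this case the last two rounds include round $1$). By Lemma~\ref{lm:betawelfare}, each has $v(\cdot)\le\frac{4}{3}W$. So the right-hand side is at most $2\epsilon+6\cdot\frac43 W=2\epsilon+8W$. Thus $W\ge \frac18 v(O)-c(O)-\frac{\epsilon}{4}$, and since $\frac18>0.0328$ and $v(O)\ge 0$, the claim follows.

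\textbf{Main obstacle.} The only delicate points are bookkeeping. First, one must make sure that every set appearing in the bounds (in particular $u^*$, and in the trivial case the round-1 sets) actually lies in the pool over which $S^*$ maximizes $v-p$. Second, one must make sure that the $\epsilon$ term appears only in the $M\le1$ case. The optimization over $\alpha$ is routine, though I would double-check numerically that $1/K$ indeed exceeds $0.0328$ ($1/30.397\approx 0.03290$).
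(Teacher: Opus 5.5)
Your proposal is correct and matches the paper's proof essentially step for step. It uses the same split into $M\ge 2$ and $M\le 1$, and it combines Lemmas~\ref{lm:total}--\ref{lm:boundS} into the same constant $4\alpha+\frac{10\alpha-2}{(\alpha-1)(1-1/\beta)}=\frac{4(13+4\sqrt{6})}{3}$ at the prescribed $\alpha,\beta$; in the trivial case it gives the same bound $\frac{1}{8}v(O)-c(O)-\epsilon/4$, and your rewriting $K=\frac{52}{3}+4x+\frac{32}{3x}$ with $x=\alpha-1$ simply spells out the optimization over $\alpha$ that the paper leaves implicit.
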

\subsection{Improved Approximation Guarantees for Monotone Valuation Functions}\label{sc:monotone}
In this section, we consider the special case where the valuation function $v(\cdot)$ is monotone. While the theoretical analysis presented in the previous section remains valid for this case, the monotonicity of the valuation function allows us to simplify the mechanism and achieve stronger approximation guarantees.

Specifically, under monotonicity, the mechanism only needs to maintain a single sequence of candidate sets, i.e., $\ell = 1$. Consequently, the sets corresponding to the second candidate sequence $O_{2,t}^<$ and $O_{2,t}^>$ (from Definition~\ref{def:opt}) are empty for all $t$. 
Furthermore, monotonicity implies $v\left( \bigcup_{t=1}^{M} S_{1,t} \cup O \right) \ge v(O)$. These observations lead to a tighter bound compared to Lemma~\ref{lm:total}, as established in the following lemma.
\begin{lemma}\label{lm:mono-total}
It holds that $v(O) - \beta \cdot c(O)
\le\rho_M + v\!\left( \bigcup_{t=1}^{M} S_{1,t} \right)+v(u^*)$.
\end{lemma}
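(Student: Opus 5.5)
We want to show $v(O)-\beta c(O)\le \rho_M + v(U)+v(u^*)$, where $U=\bigcup_{t=1}^M S_{1,t}$ and $\ell=1$. Since $\ell=1$, the partition of Definition~\ref{def:opt} reduces to $O\setminus U=\bigcup_{t=1}^M O_{1,t}^{<}\cup O^*$. The plan is to start from monotonicity, $v(O)\le v(U\cup O)$. We then add the elements of $O\setminus U$ to $U$ one at a time. This gives
\[
v(O)\le v(U)+v(O^*\mid U)+\sum_{t=1}^M\sum_{u\in O_{1,t}^{<}} v(u\mid U),
\]
using submodularity to drop the intermediate additions. The singleton term is handled directly, since $v(O^*\mid U)\le v(O^*)\le v(u^*)$ whenever $O^*$ is nonempty, because then $O^*=u^*$.

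The main step is bounding the contribution of the rejected elements. Fix $u\in O_{1,t}^{<}$. It rejected the price $p(u)$ offered in round $t$. The offer is the minimum of earlier prices and the new price. The earlier prices were all accepted, or equal $B\ge c(u)$, so $u$ only rejects because of the new term. This gives $c(u)>\frac{v(u\mid S_{1,t}^u)}{\beta+\rho_t/B}$. Since $S_{1,t}^u\subseteq U$, submodularity gives
\[
v(u\mid U)\le v(u\mid S_{1,t}^u)<\beta c(u)+\frac{\rho_t}{B}c(u)\le \beta c(u)+\frac{\rho_M}{B}c(u),
\]
where the last step uses $\rho_t\le\rho_M$. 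Summing over all rejected elements of $O$, their total is at most $\beta c(O)+\rho_M\, c(O)/B$.

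To finish, we need $c(O)\le B$. This holds because $O$ is a budget-feasible optimum, so its payments are at most $B$ and its costs do not exceed its payments. If the paper's benchmark only requires each $c(u)\le B$, I would still interpret $O$ as satisfying $c(O)\le B$, which is implicit in the budgeted problem. Then $\rho_M c(O)/B\le\rho_M$, and combining this with the singleton bound yields the claim. Any leftover $\beta c$ terms are nonnegative, so bounding the rejected elements' costs by $c(O)$ is valid.

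The main obstacle is the careful rejection argument. We must confirm that a rejection in round $t$ implies $c(u)$ exceeds the newly computed price term, and not merely the running minimum. We must also confirm that $S_{1,t}^u\subseteq U$ holds even when the round is cut short by a break. I would handle both by noting that $p(u)$ is nonincreasing and that every earlier offer was accepted.
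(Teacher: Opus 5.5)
Your proposal is correct and follows the route the paper indicates. You specialize the argument of Lemma~\ref{lm:total} to $\ell=1$, so only the classes $O_{1,t}^{<}$ and $O^*$ remain, and bound each rejected element's marginal by $(\beta+\rho_M/B)\,c(u)$ using submodularity and the rejected price. You bound the singleton term by $v(u^*)$ and replace the two-sequence step $v(U_1\cup O)+v(U_2\cup O)\ge v(O)$ with monotonicity, $v(U\cup O)\ge v(O)$. Your remark that $c(u)\le p_{\mathrm{old}}$ forces the rejection to come from the new price term tightens a step the paper states tersely.
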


Combining Lemma~\ref{lm:mono-total} with the remaining lemmas that continue to hold in the monotone setting (Lemmas~\ref{lm:boundrho}--\ref{lm:boundS}), we derive an improved approximation ratio for monotone submodular valuation functions, as stated in Theorem~\ref{thm:mono-ratio}.
\begin{theorem}\label{thm:mono-ratio}
By setting $\ell=1$, $\alpha=1+\frac{\sqrt{6}}{2}$, and $\beta=3$, \ALG~outputs a winning set $S^*$ satisfying $v(S^*)-c(S^*)\geq 0.0877\cdot v(O)-c(O)-\epsilon/3$ for the monotone submodular valuation function $v(\cdot)$, where $\epsilon>0$.
\end{theorem}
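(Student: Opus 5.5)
The plan is to combine Lemma~\ref{lm:mono-total} with Lemmas~\ref{lm:boundrho}, \ref{lm:betawelfare} and~\ref{lm:boundS}, all specialized to $\ell=1$. This gives an upper bound on $v(O)-\beta c(O)$ by a constant multiple of $W:=v(S^*)-p(S^*)$. We then tune $\alpha$ and compare with $v(S^*)-c(S^*)\ge W$, which holds because $p\ge c$ by individual rationality. Throughout, $W\ge 0$ by Lemma~\ref{lm:betawelfare}. If $v(O)-3c(O)<0$, then $0.0877\,v(O)-c(O)<0\le W$ and the claim is immediate, so we may assume $v(O)-3c(O)\ge 0$.

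\textbf{Case $M\ge 2$.} We bound the three terms on the right of Lemma~\ref{lm:mono-total}.
\begin{itemize}
\item By Lemma~\ref{lm:boundrho}, $\rho_M\le 2\alpha W$.
\item By subadditivity of the nonnegative submodular $v$, $v(\bigcup_t S_{1,t})\le\sum_t v(S_{1,t})$. Lemma~\ref{lm:boundS} with $\ell=1$ bounds this sum by $\frac{2\alpha W}{(\alpha-1)(1-1/\beta)}$.
\item By Lemma~\ref{lm:betawelfare}, $v(u^*)\le \frac{v(u^*)-p(u^*)}{1-1/\beta}$. Since $u^*$ is among the candidates in Line~\ref{ln:winning}, this is at most $\frac{W}{1-1/\beta}$.
\end{itemize}
With $\beta=3$ we have $1-1/\beta=2/3$, and therefore
$$v(O)-3c(O)\le W\left(2\alpha+\frac{3\alpha}{\alpha-1}+\frac32\right)=W\left(2(\alpha-1)+\frac{3}{\alpha-1}+\frac{13}{2}\right).$$
Minimizing over $\alpha$ gives $(\alpha-1)^2=3/2$, i.e.\ $\alpha=1+\frac{\sqrt6}{2}$. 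At this value the constant is $\frac{13}{2}+2\sqrt6\approx 11.399$. Hence
$$W\ge \frac{v(O)-3c(O)}{6.5+2\sqrt6}\ge 0.0877\,v(O)-c(O),$$
where the last step uses $3/11.399<1$.

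\textbf{Case $M\le 1$.} Here $M=1$ and $\rho_M=\rho_1=\epsilon$. Lemma~\ref{lm:mono-total} gives $v(O)-3c(O)\le \epsilon+v(S_{1,1})+v(u^*)$. Applying Lemma~\ref{lm:betawelfare} to both sets bounds each of $v(S_{1,1})$ and $v(u^*)$ by $\frac32 W$, since both are candidates in Line~\ref{ln:winning}. This yields $v(O)-3c(O)\le \epsilon+3W$, so $W\ge \frac{v(O)}{3}-c(O)-\frac{\epsilon}{3}$, which is stronger than needed. Combining the two cases gives the stated bound with additive error $\epsilon/3$.

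The main point to double-check is that Lemmas~\ref{lm:boundrho}--\ref{lm:boundS} transfer verbatim to $\ell=1$; the text asserts this. The one step that needs justification is the subadditivity bound on $v(\bigcup_t S_{1,t})$, which requires only nonnegativity and submodularity. The constant $0.0877$ is simply $1/(6.5+2\sqrt6)$ rounded down.
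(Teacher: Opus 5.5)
Your proof is correct and follows the paper's argument essentially verbatim. You use the same split into $M\ge 2$ and $M\le 1$, and the same combination of Lemmas~\ref{lm:boundrho}, \ref{lm:betawelfare}, \ref{lm:boundS} and~\ref{lm:mono-total}. This yields the constant $2(\alpha-1)+\frac{3}{\alpha-1}+\frac{13}{2}$, which equals the paper's $2\alpha+\frac{3\alpha-1}{(\alpha-1)(1-1/\beta)}$ at $\beta=3$. You also get the same bound $\frac{v(O)}{3}-c(O)-\frac{\epsilon}{3}$ in the one-round case.

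You additionally make explicit the subadditivity step $v(\bigcup_t S_{1,t})\le\sum_t v(S_{1,t})$, which the paper uses without comment. Your preliminary reduction to $v(O)-3c(O)\ge 0$ is harmless but not needed.
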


\section{Variant for Valuation Maximization}\label{sc:valuation}
From a problem definition perspective, the \textit{Valuation Maximization} in procurement auctions (i.e., $\max_{S \subseteq \mathcal{N}} v(S)$) can be viewed as a special case of \textit{Welfare Maximization} (i.e., \(\max_{S\subseteq \mathcal{N}} \bigl(v(S)-c(S)\bigr)\)) when the cost function \(c(\cdot)\equiv 0\). However, while the cost is absent from the valuation maximization goal, the actual costs of sellers are non-zero and private, which continue to dictate feasibility and pricing. Thus, the theoretical guarantees derived in the previous sections for our \ALG~do not immediately apply to this problem.


Nevertheless, since valuation maximization is easier to handle than welfare maximization as we discussed in Section \ref{sc:intro}, \ALG\ can be adapted to valuation maximization via several simple yet effective modifications, while preserving its feasibility and key economic properties. In this section, we formally introduce the variant denoted as \textbf{\VALG}, and show that it can achieve a deterministic approximation ratio of $1/(12+4\sqrt{3})$, which significantly improves upon the current best-known deterministic ratio of \(1/64\) proposed by \cite{balkanski2022deterministic}. The \VALG~variant is obtained by introducing the following specific modifications to \ALG\ (the full pseudocode is provided in Algorithm \ref{alg:bfmsvm} of Appendix \ref{app:pseudocode}):

\begin{enumerate}
\item Replace the initialization in Line~\ref{ln:init} of Algorithm~\ref{alg:bfm} with
    ``$t \gets 1$; $\rho_t \leftarrow \max_{u \in R} v(u)$; $S_{i,t}\gets\emptyset$ for each $i\in[\ell]$; let $a\in\arg\max_{u\in R}v(u)$; $S_{1,t}\gets\{a\}$; $\operatorname{owner}(u)\gets 0$ for each $u\in R$ and $\operatorname{owner}(a)\gets 1$''.
    \item The singleton candidate set \(u^*\) is omitted.
    \item The value-to-payment ratio parameter $\beta$ is set to $0$. 
    \item The break condition in Line \ref{ln:break} is replaced with \( v(S_{j,t} \cup \{u\}) > \rho_t \).
\end{enumerate}

These modifications are primarily motivated by the following considerations: (1) they refocus the optimization objective on valuation rather than social welfare; (2) the valuation objective is easier to evaluate than the welfare objective (the cost terms in welfare functions are private and thus cannot be directly queried via an oracle), which enables us to employ more effective seller set filtering based on the threshold.

The economic properties established in Theorem~\ref{lm:feasible} continue to hold for this variant except for the non-negative auctioneer surplus property. We next analyze its approximation guarantee. Using arguments similar to those in
Lemmas~\ref{lm:total}--\ref{lm:boundS}, we can derive analogous bounds of \VALG~for valuation maximization, as formalized in
Lemmas~\ref{lm:total-var}--\ref{lm:boundS-var}.
\begin{lemma}\label{lm:total-var}
$v(O)\leq 2\rho_M+2\sum_{i=1}^\ell\sum_{t=1}^{M}v(S_{i,t})$.
\end{lemma}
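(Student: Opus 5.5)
The plan is to follow the proof of Lemma~\ref{lm:total}, specialized to \VALG. There $\beta=0$, $c\equiv 0$ enters only through feasibility, $u^*$ is absent, and the break condition is $v(S_{j,t}\cup\{u\})>\rho_t$.

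\textbf{Step 1: decomposition.} Take the partition of Definition~\ref{def:opt} without the $O^*$ part; with $\ell=2$,
\[
O\setminus U_1=\bigcup_{t=1}^{M}\bigl(O_{1,t}^{<}\cup O_{2,t}^{<}\cup O_{2,t}^{>}\bigr).
\]
Submodularity and non-negativity give $v(O)\le v(O\cup U_1)+v(O\cup U_2)$, since $U_1\cap U_2=\emptyset$ by the owner index. For each $i$ we then write
\[
v(O\cup U_i)\le v(U_i)+\sum_{u\in O\setminus U_i} v(u\mid U_i).
\]
We also use $v(U_i)\le\sum_t v(S_{i,t})$, and each marginal $v(u\mid U_i)$ is bounded by the marginal with respect to the particular set $S_{i,t}^u$ at which $u$ was handled. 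Adding the two inequalities produces the factor $2$ on the $v(S_{i,t})$ terms.

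\textbf{Step 2: rejected elements.} Let $u\in O_{j,t}^{<}$. With $\beta=0$ the posted price is $p(u)=\min\{\text{prior price},\, B\,v(u\mid S_{j,t}^u)/\rho_t\}$. Since $u$ accepted its earlier prices and rejected in round $t$, the binding term must be the new one, so $c(u)>B\,v(u\mid S_{j,t}^u)/\rho_t$, i.e.\ $v(u\mid S_{j,t}^u)<\rho_t c(u)/B\le\rho_M c(u)/B$. Summing over the rejected elements of $O$ and using $c(O)\le B$ gives a total of at most $\rho_M$ per inequality, hence $2\rho_M$ overall. One subtlety: the marginal must be taken with respect to the current $S_{j,t}^u$, while the analysis needs it with respect to $U_i$. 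For the owned sequence $j$ the set $S^u_{j,t}\subseteq U_j$, so submodularity applies. For the sibling sequence, elements routed by the greedy rule satisfy $v(u\mid S_{i,t}^u)\le v(u\mid S_{j,t}^u)$, so the same bound holds.

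\textbf{Step 3: elements inserted into the sibling.} Let $u\in O_{2,t}^{>}$. The greedy choice gives $v(u\mid S_{1,t}^u)\le v(u\mid S_{2,t}^u)$. Telescoping over insertion order, the sum of these marginals is at most $v(S_{2,t})$ (up to a harmless non-negativity adjustment for non-monotone $v$). These terms are absorbed into the $2\sum_i\sum_t v(S_{i,t})$ budget, which is why the coefficient is $2$ rather than $1$.

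\textbf{Step 4: the new pieces for \VALG.} First, the initial seed $a$ in $S_{1,1}$ is just an element of $U_1$, which is harmless. Second, elements that triggered the break are handled as follows. Such an element is not added in that round, but it is re-examined in later rounds, because the loop continues until $R$ is covered by the last two rounds. At termination, therefore, every surviving element of $O$ lies either in $U_1\cup U_2$ or among the rejected ones. I expect the main obstacle to be this bookkeeping: an element that broke in round $t$ is absent from $S_{i,t}$, yet it must land in a later set or be rejected. The cleanest route is to check that the final rounds $M-1,M$ covered $R$, and that any element that broke in round $M-1$ or $M$ appears in a set of some round, or else is charged via $v(u)\le\rho_1\le\rho_M$. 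If that extra charge is needed, I would absorb it by noting that the break implies $v(S_{j,t}\cup\{u\})>\rho_t$, and bound the lone such element by the $2\rho_M$ slack together with the factor-$2$ structure.
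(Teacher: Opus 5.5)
Your proposal is correct and follows essentially the same route as the paper's proof: disjointness of $U_1,U_2$ gives $v(O)\le v(O\cup U_1)+v(O\cup U_2)$, and the marginal of each rejected element of $O$ is bounded by $\rho_t c(u)/B$, totalling at most $\rho_M$ per side via $c(O)\le B$. Elements first inserted into the sibling sequence are charged to $\sum_t v(S_{2,t})$ (resp.\ $\sum_t v(S_{1,t})$) through the greedy rule, and subadditivity absorbs $v(U_1)+v(U_2)$.

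The fallback charge you contemplate in Step 4 is unnecessary, and you should not rely on spare room in the $2\rho_M$ term, which the rejected elements may already exhaust. The stopping rule itself forces $R\subseteq\bigcup_i (S_{i,M-1}\cup S_{i,M})$, so no break can occur in round $M$, and every element of $O$ is either in $U_1\cup U_2$ or was rejected. Likewise, the ``non-negativity adjustment'' in Step 3 is just the fact that acceptance yields $v(u\mid S_{2,t}^u)\ge \rho_t c(u)/B\ge 0$ for every inserted $u$.
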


\begin{lemma}\label{lm:boundrho-var}
We have $v(S^*)+\rho_1\geq \rho_{M-1}$, which can yield $\frac{\alpha^2}{\alpha-1}v(S^*)\geq \rho_M$.
\end{lemma}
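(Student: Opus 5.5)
The plan is to read the inequality $v(S^*)+\rho_1\ge\rho_{M-1}$ off the (modified) break condition $v(S_{j,t}\cup\{u\})>\rho_t$ in round $M-1$. I then convert it into the bound on $\rho_M$ using $\rho_t=\alpha^{t-1}\rho_1$, splitting into cases according to $M$. Throughout I use two facts.

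\begin{itemize}
\item $\rho_1=\max_{u\in R}v(u)$ with $R$ taken at initialization. Since $R$ only shrinks afterwards, $v(u)\le\rho_1$ for every seller ever processed.
\item Every element accepted into a candidate set has non-negative marginal gain. With $\beta=0$, the offered price is at most $B\,v(u\mid S_{j,t}^u)/\rho_t$, and a seller with $c(u)\ge 0$ only accepts a price $p(u)\ge c(u)\ge 0$. Hence $v(S_{i,t})$ never decreases while the set is built, even though $v$ need not be monotone.
\end{itemize}

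\emph{Case $M\ge 2$: the first inequality.} I argue exactly as in Lemma~\ref{lm:boundrho}. Round $M$ is reached only if the termination test fails after round $M-1$. That requires some active seller who was neither added nor removed in round $M-1$, which can only happen if the inner loop of round $M-1$ ended through the break. Let $u$ be the breaking element and $S_{j,M-1}$ the set it was tested against.

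Because the break exits the loop immediately, $S_{j,M-1}$ keeps exactly its state at that moment. It is also one of the sets from rounds $M-1$ and $M$ over which $S^*$ maximizes $v$. By submodularity, $v(\emptyset)=0$, and the first fact,
$$\rho_{M-1}<v(S_{j,M-1}\cup\{u\})\le v(S_{j,M-1})+v(u)\le v(S^*)+\rho_1 .$$

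\emph{Case $M\ge 2$: the second inequality.} Writing $\rho_{M-1}=\rho_M/\alpha$ and $\rho_1=\rho_M/\alpha^{M-1}$, the first inequality gives
$$v(S^*)\ \ge\ \rho_M\left(\frac{1}{\alpha}-\frac{1}{\alpha^{M-1}}\right).$$
\begin{itemize}
\item If $M\ge 3$, then $\frac1\alpha-\frac1{\alpha^{M-1}}\ge\frac1\alpha-\frac1{\alpha^2}=\frac{\alpha-1}{\alpha^2}$, which is exactly $\frac{\alpha^2}{\alpha-1}v(S^*)\ge\rho_M$.
\item If $M=2$, the bound above degenerates to $v(S^*)\ge 0$, so I argue directly. $S_{1,1}$ is among the sets considered for $S^*$, it starts as $\{a\}$, and by the second fact its value never drops below $v(a)=\rho_1$. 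Hence
$$v(S^*)\ \ge\ \rho_1=\frac{\rho_2}{\alpha}\ \ge\ \frac{\alpha-1}{\alpha^2}\,\rho_2 .$$
\end{itemize}

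\emph{Case $M=1$.} The first claim is immediate with the convention $\rho_0=\rho_1/\alpha$, since then $\rho_{M-1}=\rho_0<\rho_1$. For the second, the same monotonicity-of-construction argument gives $v(S^*)\ge\rho_1=\rho_M$. This implies the claim because $\frac{\alpha-1}{\alpha^2}\le\frac14\le 1$.

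The step I expect to be the main obstacle is the justification that $v(S^*)\ge\rho_1$ in the small-$M$ cases for a non-monotone $v$. This is where acceptance must be tied to a non-negative price, and hence to a non-negative marginal gain. If a seller could accept a negative offer, I would instead add an explicit rejection of offers $p(u)<0$, matching the pricing rule. The identification of the breaking set with a set in the argmax is routine but must use that the break stops all further modification of $S_{j,M-1}$.
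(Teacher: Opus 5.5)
Your proof is correct and follows essentially the same route as the paper. As in the paper, the break in round $M-1$, together with submodularity and $v(u')\le\rho_1$, gives $v(S^*)+\rho_1\ge\rho_{M-1}$; the same case split ($M\ge 3$ via $\rho_1\le\rho_M/\alpha^2$, and $M\le 2$ via $v(S_{1,1})=\rho_1$) then yields the bound on $\rho_M$.

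One minor remark: in BFM-VM round $1$ is pure initialization (the loop starts at $t=2$), so $S_{1,1}=\{a\}$ is never modified and your non-negative-marginal-gain fact is unnecessary. For the same reason $M\ge 2$ always holds, and the $M=2$ instance of the first inequality holds trivially ($v(S^*)\ge 0$) rather than coming from a break in round $1$.
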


\begin{lemma}\label{lm:boundS-var} 
$ \sum_{i=1}^\ell\sum_{t=1}^{M}v(S_{i,t})\leq \frac{4\alpha-2}{\alpha-1}v(S^*)$
\end{lemma}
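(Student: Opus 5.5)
The plan is to split the double sum according to whether the round $t$ is one of the last two rounds $\{M-1,M\}$, from which $S^*$ is selected, or an earlier round $t\le M-2$. The last two rounds will be charged directly to $v(S^*)$. The earlier rounds will be charged to the geometrically growing thresholds, and these will then be controlled with Lemma~\ref{lm:boundrho-var}. The target constant decomposes as $\frac{4\alpha-2}{\alpha-1}=4+\frac{2}{\alpha-1}$, so the last two rounds should contribute $4\,v(S^*)$ and the earlier rounds $\frac{2}{\alpha-1}v(S^*)$.

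\emph{Step 1 (last two rounds).} In \VALG\ the singleton $u^*$ is omitted and $\beta=0$. The winning set is therefore $S^*\in\arg\max_{A\in\{S_{i,t}\mid i\in[\ell],\,t\in\{M-1,M\}\}}v(A)$. Hence each of the at most $2\ell=4$ sets $S_{i,M-1},S_{i,M}$ has value at most $v(S^*)$. Their total contribution is thus at most $4v(S^*)$. If $M=1$, round $M-1$ does not exist, and the bound only improves.

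\emph{Step 2 (per-round threshold bound).} I would show that $v(S_{i,t})\le\rho_t$ for every $i\in[\ell]$ and $t\in[M]$. By the modified break condition, an element $u$ is added to $S_{j,t}$ only if $v(S_{j,t}\cup\{u\})\le\rho_t$. When the condition fails, the loop breaks without adding $u$, so the invariant is preserved throughout the round. The only exception is the initialization $S_{1,1}=\{a\}$, which is harmless because $v(a)=\max_{u\in R}v(u)=\rho_1$.

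\emph{Step 3 (earlier rounds).} With $\rho_t=\rho_1\alpha^{t-1}$, a geometric sum gives
\[
\sum_{i=1}^{\ell}\sum_{t=1}^{M-2}v(S_{i,t})\le 2\sum_{t=1}^{M-2}\rho_t=\frac{2\rho_1(\alpha^{M-2}-1)}{\alpha-1}=\frac{2(\rho_{M-1}-\rho_1)}{\alpha-1}.
\]
The first part of Lemma~\ref{lm:boundrho-var}, namely $v(S^*)+\rho_1\ge\rho_{M-1}$, bounds the right-hand side by $\frac{2}{\alpha-1}v(S^*)$. Adding this to Step 1 gives the claim. When $M\le 2$ the earlier-round sum is empty, so Step 1 alone suffices.

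The main subtlety is Step 3. Summing all thresholds crudely as $\rho_{M-2}\cdot\frac{\alpha}{\alpha-1}$ leaves a stray $\rho_1$ term that cannot be absorbed. The fix is to keep the exact geometric sum $(\rho_{M-1}-\rho_1)/(\alpha-1)$, so that the $-\rho_1$ cancels precisely against the $+\rho_1$ in Lemma~\ref{lm:boundrho-var}. I also need to check that the invariant in Step 2 is unaffected by the owner-routing rule. It is, because the break check is applied to whichever set $S_{j,t}$ the element is routed to.
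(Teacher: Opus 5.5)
Your proposal is correct and follows essentially the same route as the paper. You charge rounds $M-1$ and $M$ directly to $v(S^*)$ via the selection rule, bound each earlier $v(S_{i,t})$ by $\rho_t$ via the break condition, and cancel the $\rho_1$ term using the exact geometric sum $(\rho_{M-1}-\rho_1)/(\alpha-1)$ together with $v(S^*)+\rho_1\ge\rho_{M-1}$ from Lemma~\ref{lm:boundrho-var}. The only cosmetic difference is that the paper bounds each sequence separately by $\bigl(2+\frac{1}{\alpha-1}\bigr)v(S^*)$ and then sums over $i\in[\ell]$.
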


Combining all the above, we can directly obtain the approximation ratio of \VALG~for valuation maximization: 
\begin{theorem}\label{thm:ratio-vm}
By setting $\ell=2$ and $\alpha=1+\sqrt{3}$, the deterministic mechanism \VALG~outputs a winning set $S^*$ satisfying $v(S^*)\geq v(O)/(12+4\sqrt{3})$ for the general (not necessarily monotone) submodular valuation function $v(\cdot)$. Moreover, it runs in \(\OO(n \log n)\) time.

\end{theorem}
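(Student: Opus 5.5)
The approximation ratio follows by chaining Lemmas~\ref{lm:total-var}, \ref{lm:boundrho-var} and \ref{lm:boundS-var}, then optimizing over $\alpha$. Substituting the second part of Lemma~\ref{lm:boundrho-var}, $\rho_M\le \frac{\alpha^2}{\alpha-1}v(S^*)$, and Lemma~\ref{lm:boundS-var} into Lemma~\ref{lm:total-var} gives
\[
v(O)\le \frac{2\alpha^2}{\alpha-1}v(S^*)+\frac{2(4\alpha-2)}{\alpha-1}v(S^*)=\frac{2\alpha^2+8\alpha-4}{\alpha-1}\,v(S^*).
\]
It remains to minimize $g(\alpha)=\frac{2\alpha^2+8\alpha-4}{\alpha-1}$ over $\alpha>1$. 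Set $x=\alpha-1$. The numerator becomes $2(x+1)^2+8(x+1)-4=2x^2+12x+6$, so $g=2x+12+6/x$. By AM--GM this is minimized at $x=\sqrt3$, giving $g=12+4\sqrt3$. This confirms the choice $\alpha=1+\sqrt3$ and yields $v(S^*)\ge v(O)/(12+4\sqrt3)$.

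One edge case needs checking: the lemmas assume at least two rounds, since they use $\rho_{M-1}$. If the mechanism stops after round $1$, then $\rho_M=\rho_1=\max_u v(u)\le v(S_{1,1})\le v(S^*)$, because $S_{1,1}$ is seeded with $a$. I would argue that the bounds still hold, or hold with a better constant, directly from Lemma~\ref{lm:total-var}. Determinism is immediate: \VALG\ has no randomization; all choices are argmax with fixed tie-breaking.

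For the running time, I would bound the number of rounds and the work per round.

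\textbf{Work per round.} Each round scans every active seller at most once. It makes $\ell=2$ marginal-value oracle queries per seller and maintains $v(S_{j,t})$ incrementally, so a round costs $\OO(n)$.

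\textbf{Number of rounds.} Thresholds start at $\rho_1=\max_u v(u)$ and grow by the factor $\alpha$. A new round after round $t$ requires the break condition $v(S_{j,t}\cup\{u\})>\rho_t$ to have fired, which is where the $\OO(M)$ bound comes from. Any such set satisfies $v(\cdot)\le\sum_{u}v(u)\le n\max_u v(u)$ by submodularity and normalization. Hence $\rho_{t}<n\rho_1$, so $\alpha^{t-1}<n$ and $M=\OO(\log n)$. Combining the two gives total time $\OO(n\log n)$.

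The main obstacle is making the round-count argument airtight. I need to confirm that reaching round $t+1$ indeed forces a break in round $t$ or $t-1$, mirroring the argument in Lemma~\ref{lm:boundrho}, and that the valuation bound used involves nonnegative singleton values. If the break happened in round $t-1$ rather than $t$, the bound only shifts $M$ by an additive constant, so the $\log n$ count is unaffected.
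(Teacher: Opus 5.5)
Your proposal is correct and follows the paper's proof: chaining Lemmas~\ref{lm:total-var}--\ref{lm:boundS-var} gives $v(O)\le \frac{2(\alpha^2+4\alpha-2)}{\alpha-1}\,v(S^*)$, whose coefficient is minimized at $\alpha=1+\sqrt{3}$ with value $12+4\sqrt{3}$, and the $\OO(\log n)$ round count comes from the break forced in round $M-1$ plus geometric growth of $\rho_t$ (your direct bound $v(S'\cup\{u'\})\le n\rho_1$ just skips the paper's detour through $2v(O)\le 2n\rho_1$). Your $M=1$ edge case never occurs, because the repeat loop of \VALG\ always executes at least once (so $M\ge 2$), and Lemma~\ref{lm:boundrho-var} is stated with no restriction on $M$ anyway.
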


\begin{figure*}[!ht]
	\begin{center}
		\centerline{\includegraphics[width=0.98\linewidth]{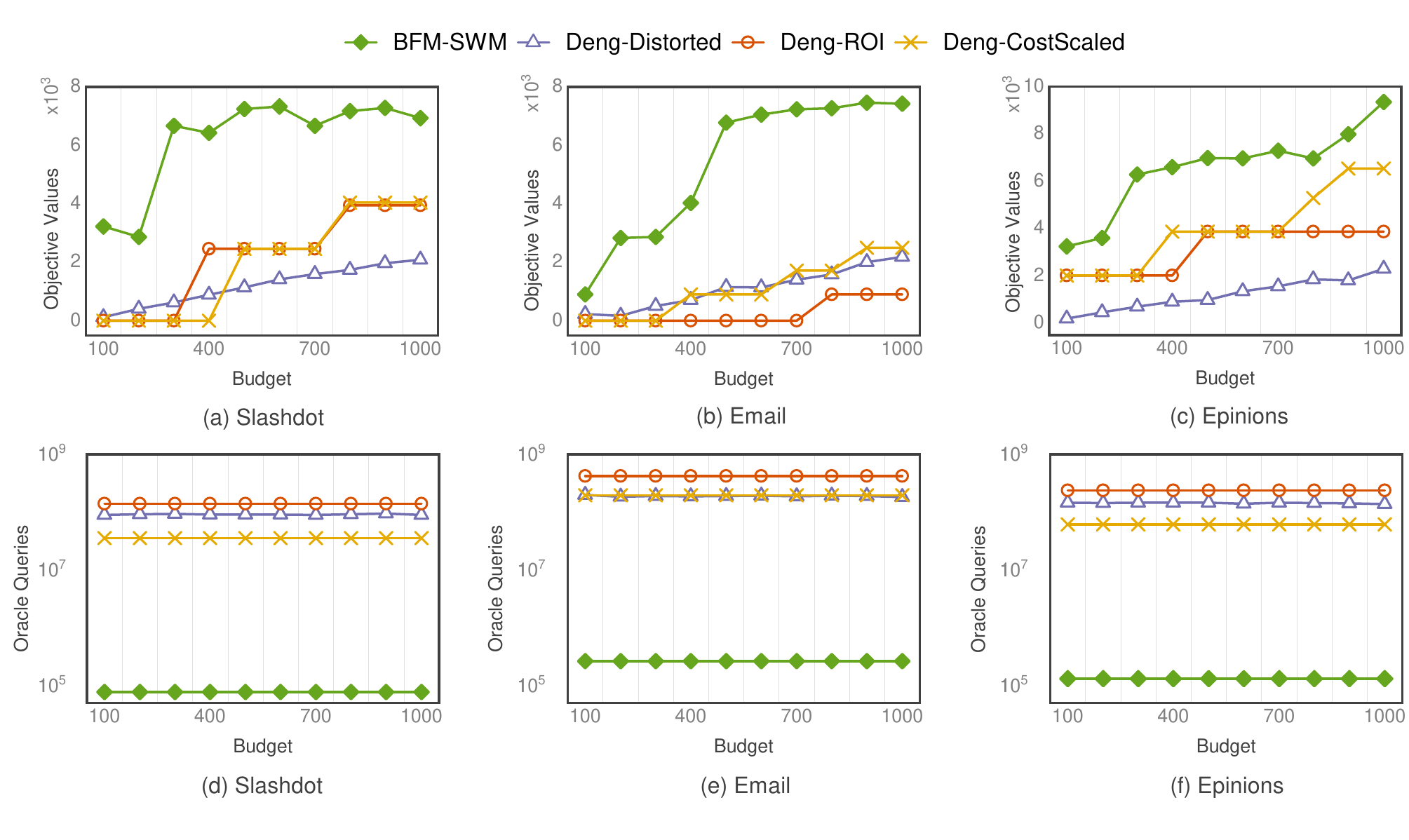}}
		\caption{Experiments on Influence Maximization. The y-axis is on a logarithmic scale.}\label{fig:wm-main}
	\end{center}
\end{figure*}

\section{Experiments}\label{sc:exp}
In this section, we empirically evaluate the effectiveness and efficiency of our \ALG~mechanism. We conduct extensive experiments on a real-world application for submodular welfare maximization in procurement auctions (i.e., influence maximization). Additionally, we compare the empirical performance of our by-product variant \VALG~against state-of-the-art mechanisms for valuation maximization in a crowdsourcing application, as shown in Appendix~\ref{app:exp-vm}. The experimental results strongly confirm the superiority of our framework for both welfare and valuation maximization objectives.

\subsection{Baselines and Implementation}
To the best of our knowledge, \cite{deng2025procurement} is the only existing work that has successfully developed mechanisms with provable approximation guarantees for welfare maximization in procurement auctions, albeit without considering budget feasibility. They proposed a framework for converting regularized submodular maximization algorithms into mechanisms with economic properties including truthfulness, individual rationality and non-negative auctioneer surplus. We adopt the algorithms with provable approximation guarantees listed in their work as our baselines, including Distorted Greedy~\cite{harshaw2019submodular}, ROI Greedy~\cite{jin2021unconstrained}, and Cost-Scaled Greedy~\cite{nikolakaki2021efficient}. These algorithms are converted into mechanisms using the framework of \cite{deng2025procurement}, denoted as \textbf{Deng-Distorted}, \textbf{Deng-ROI}, and \textbf{Deng-CostScaled}, respectively. 

Since these baseline mechanisms do not incorporate budget constraints, we adopt the following adaptation: after executing each mechanism to completion, we select the longest prefix of the returned solution that satisfies the budget constraint. Given that all these mechanisms are greedy-based, this truncation strategy effectively captures the high-value portion of their solution while satisfying the budget. All baselines are implemented with the acceleration techniques described in \cite{deng2025procurement} to ensure efficient runtime performance. For our mechanisms, the structural parameters ($\alpha$, $\beta$ and $\ell$) are set according to the values derived from our theoretical analysis to maximize approximation ratios. Separately, for all implemented mechanisms, the error parameter $\epsilon$ (when applicable) is set uniformly to the default value of $0.1$. All experiments were conducted on a Windows workstation with an Intel Core Ultra 9 285K @ 3.70 GHz CPU\footnote{The source code is available at a repository: \url{https://github.com/xue74193-dot/BFM-SWM}.}.

\subsection{Applications}
We consider an application of influence maximization on social networks, which has been widely studied in prior work such as~\cite{deng2025procurement,dutting2022deletion,el2023fairness,han2023triple,halabi2023fairness}. In this application, given a social network graph $G=(\N,E)$, a buyer (e.g., an advertiser) aims to incentivize a subset of users $S \subseteq \N$ to spread information to their neighbors. Each user $u \in \N$ has a private cost $c(u)$ for providing the service. The buyer's valuation for the influence of a selected set $S$ is measured by the well-known coverage function $v(S) = \left|\bigcup_{u\in S}\mathcal{T}(u)\right|,$ where $\mathcal{T}(u)=\{v \mid (u,v)\in E\}$ denotes the neighbor set of user $u$. It is well-established that this valuation function is submodular~\cite{singer2010budget,dutting2022deletion,el2023fairness}. The total payment made by the buyer to the selected users is constrained by a fixed budget $B$. Our objective is to maximize the social welfare, i.e., $v(S)-c(S)$, subject to this budget constraint. Our experiments are conducted on three large-scale network datasets from SNAP~\cite{leskovec2014snap}: (1) Slashdot, with 77,360 nodes and 905,468 edges; (2) Email, with 265,214 nodes and 420,045 edges, and (3) Epinions, with 131,828 nodes and 841,372 edges.

\subsection{Experimental Results}

In this section, we evaluate the performance of all mechanisms based on two key metrics: (1) the \textit{objective function value} (i.e., social welfare $v(S) - c(S)$), and (2) the \textit{query complexity}, measured by the number of oracle queries to the valuation function $v(\cdot)$, which is hardware-independent and serves as the standard measure for computational efficiency in the value oracle model. We also record the wall-clock running time for all experiments. These results are deferred to Appendix~\ref{app:runtime} due to space constraints and further confirm the superior computational efficiency of our approach.

The results regarding objective function value are presented in Figure~\ref{fig:wm-main}(a)--(c). It can be observed that our \ALG~mechanism consistently outperforms all baselines with significant advantages across the three datasets. Quantitatively, \ALG~achieves a social welfare ranging from $1.22\times$ to $26.41\times$ that of the best-performing baseline at each budget level, with an overall average improvement factor of $4.49\times$. This substantial gain stems from the fact that our mechanism offers provable approximation guarantees under budget constraints, whereas the baseline mechanisms lack such theoretical guarantees. Furthermore, our mechanism also demonstrates significant advantages in computational efficiency, as evidenced by the oracle query results presented in Figure~\ref{fig:wm-main}(d)--(f).

\section{Conclusion}
In this paper, we study the problem of mechanism design for submodular welfare maximization in procurement auctions. We propose \ALG, the first budget-feasible mechanism with provable approximation guarantees for this objective. Furthermore, we develop a variant, \VALG, tailored for valuation maximization, which establishes a new state-of-the-art deterministic approximation ratio. The efficiency and effectiveness of our mechanisms are demonstrated both theoretically and experimentally. 

Future work includes exploring broader settings, such as Nash welfare objectives and XOS/subadditive valuations in budget-feasible procurement. While some components of our framework may carry over, extending our theoretical analysis beyond submodular valuations appears to require new techniques. Establishing lower bounds for budget-feasible mechanisms with general submodular welfare objectives is another compelling open problem.

\section*{Acknowledgments}
The work of Shuang Cui was supported in part by the National Natural Science Foundation of China under Grant No. 62502333, and the Basic Research Program of Jiangsu under Grant No. BK20250786. The work of Yu-e Sun and He Huang was supported in part by the National Natural Science Foundation of China under Grant No. 62332013.

\section*{Impact Statement}
This paper presents work whose goal is to advance the field of machine learning. There are many potential societal consequences of our work, none of which we feel must be specifically highlighted here.

\balance
\bibliography{example_paper}
\bibliographystyle{icml2026}

\newpage

\onecolumn
\appendix
\section*{Appendices}
\section{Omitted Proofs}\label{app:proof}
\subsection{Proof of Theorem \ref{lm:feasible}}
\begin{proof}
Due to the price rule in Line \ref{ln:price}, we must have $\frac{v(u\mid S_{i,t}^u)}{\beta+\rho_t/B}\geq p(u)$ for any $u\in S_{i,t}$, which implies that $v(u\mid S_{i,t}^u)- \beta \cdot p(u)\geq \frac{p(u)\rho_t}{B}$. Moreover, considering the break condition in Line \ref{ln:break}, the fact that \( \beta > 1 \) and $\rho_t>0$, we can derive
        \begin{align*}
            &\rho_t\geq v(S_{i,t})-p(S_{i,t})=\sum_{u\in S_{i,t}}v(u\mid S_{i,t}^u)-p(u)\\
          &\geq  \sum_{u\in S_{i,t}}\frac{p(u)\rho_t}{B}=p(S_{i,t})\rho_t/B>0,
        \end{align*}
which implies that the mechanism is budget-feasible (i.e., \( p(S^*) \leq B \)) and guarantees a non-negative surplus for the auctioneer.

The properties of obvious strategyproofness and individual rationality follow immediately from the observation that the mechanism takes the form of a descending clock auction~\cite{milgrom2020clock}.

Finally, we analyze the time complexity. We will show that the number of outer loop iterations $M$ in Algorithm \ref{alg:bfm} is bounded by $2+\lceil\log_{\alpha}\frac{2OPT}{\epsilon}\rceil=\OO(\log_{\alpha}\frac{OPT}{\epsilon})$. In round \( M-1 \), at least one of the candidate solutions $\{S_{i,M-1}\}_{i=1}^\ell$, must have triggered the break instruction in Line \ref{ln:break} upon the evaluation of some element. Otherwise, we would have $R \setminus \left( \bigcup_{i=1}^\ell (S_{i,M-2} \cup S_{i,M-1})\cup u^* \right) = \emptyset$, which would imply that round $M$ is never reached, contradicting the definition of $M$. Let \(S'\) and \( u' \) denote this specific candidate solution and the corresponding element, respectively. We proceed by contradiction. Suppose that $M > 2+\lceil\log_{\alpha}\frac{2OPT}{\epsilon}\rceil$. It follows that
    \begin{align*}
        & 2\left(v(O)-c(O)\right)\geq v(S')-c(S')+v(u')-c(u')\\
        &\geq v(S'\cup \{u'\})-p(S'\cup \{u'\})> \rho_{M-1}=\rho_1\cdot \alpha^{M-2}\\
        &> 2OPT,
    \end{align*}
where the second inequality follows from the submodularity of $v(\cdot)$ and the fact that every element \( u \) accepting the offered price at Line \ref{ln:accept} satisfies \( c(u) \leq p(u) \); the third inequality results from the condition triggering the break statement at Line \ref{ln:break}; and the last inequality holds because \( \rho_1 = \epsilon\). This yields a contradiction. Combining this with the fact that the inner loop executes at most \( \mathcal{O}(n) \) times, the proof is complete.
\end{proof}

\subsection{Proof of Lemma \ref{lm:total}}

\begin{proof}
Let $U_i=\bigcup_{t=1}^{M}S_{i,t}$ for $i\in\{1,2\}$. By the owner rule in Algorithm~\ref{alg:bfm}, we have $U_1\cap U_2=\emptyset$. Using Definition~\ref{def:opt}, and the non-negativity and submodularity of $v(\cdot)$, we have
\begin{align*}
      &v(U_1 \cup O) - v(U_1) - \beta \cdot c(O \setminus U_1) \\
      &\leq
      \sum_{t=1}^{M}
      \sum_{u \in O_{1,t}^{<} \cup O_{2,t}^{<} \cup O_{2,t}^{>}}
      \left(v(u \mid U_1) - \beta \cdot c(u)\right)
      + v(u^*).
\end{align*}
Here, the contribution of the final singleton class $O^*$ is upper bounded by $v(u^*)$, since $O^*\subseteq u^*$ and $v(\cdot)$ is nonnegative.

We now bound the right-hand side by analyzing the rejected elements and the elements first added to the other aggregate candidate sequence separately.

\noindent\textbf{Elements in $O_{1,t}^{<} \cup O_{2,t}^{<}$.}
For elements in $O_{1,t}^{<}$, since $S_{1,t}^u\subseteq U_1$, submodularity gives $v(u\mid U_1)\leq v(u\mid S_{1,t}^u).$ For elements in $O_{2,t}^{<}$, by definition, such elements have not been successfully inserted into either aggregate candidate sequence before they reject. Hence, they are unassigned when they are considered in round $t$. Therefore, if such an element is routed to sequence $2$, the greedy rule in Line~\ref{ln:greedy} implies $v(u\mid S_{1,t}^u)\leq v(u\mid S_{2,t}^u).$ Combining this with $S_{1,t}^u\subseteq U_1$ and submodularity gives $v(u\mid U_1)\leq v(u\mid S_{2,t}^u).$ Thus,
\begin{align*}
   &\sum_{t=1}^{M} \sum_{u \in O_{1,t}^{<} \cup O_{2,t}^{<}}
   \left( v(u \mid U_1) - \beta \cdot c(u) \right) \\
   &\leq
   \sum_{t=1}^{M}
   \left(
   \sum_{u \in O_{1,t}^{<}}
   \left( v(u \mid S_{1,t}^u) - \beta \cdot c(u) \right)
   +
   \sum_{u \in O_{2,t}^{<}}
   \left( v(u \mid S_{2,t}^u) - \beta \cdot c(u) \right)
   \right).
\end{align*}

Consider any rejected element $u\in O_{1,t}^{<}\cup O_{2,t}^{<}$. Suppose that $u$ is routed to sequence $i\in\{1,2\}$ in round $t$. The price $p(u)$ offered in round $t$ when $u$ exits the auction must be the lowest among all offers $u$ received, since prices are non-increasing. Since $u$ rejected this offer, we have $c(u)>p(u)$. Recall that the price is set as $
p(u)=\min\left\{p_{old},\frac{v(u\mid S_{i,t}^u)}{\beta+\rho_t/B}\right\}.$ Thus, the rejection implies $
c(u)>\frac{v(u\mid S_{i,t}^u)}{\beta+\rho_t/B},$ which can be rearranged to $v(u\mid S_{i,t}^u)-\beta\cdot c(u)<\frac{\rho_t}{B}c(u).$ Substituting this back into the summation, we obtain
\begin{align*}
        &\sum_{t=1}^{M}
        \sum_{u \in O_{1,t}^{<} \cup O_{2,t}^{<}}
        \left( v(u \mid U_1) - \beta \cdot c(u) \right) \leq
        \sum_{t=1}^{M}
        \sum_{u \in O_{1,t}^{<} \cup O_{2,t}^{<}}
        \frac{\rho_t}{B} c(u)\\
        &\leq
        \sum_{t=1}^{M}
        \sum_{u \in O_{1,t}^{<} \cup O_{2,t}^{<}}
        \frac{\rho_M}{B} c(u)\leq
        \frac{\rho_M}{B}\cdot c(O)
        \leq
        \rho_M,
\end{align*}
where the last inequality follows from $O$ being a feasible solution, i.e., $c(O)\leq B$, and $\rho_t$ being non-decreasing.

\noindent\textbf{Elements in $O_{2,t}^{>}$.}
For any $u\in O_{2,t}^{>}$, round $t$ is the first round in which $u$ is successfully inserted into the aggregate sequence $U_2$. Hence, $u$ was unassigned before this insertion, and the greedy rule in Line~\ref{ln:greedy} chose sequence $2$ in that round. Therefore, $v(u\mid S_{1,t}^u)\leq v(u\mid S_{2,t}^u).$
Moreover, since $S_{1,t}^u\subseteq U_1$ and $u\notin U_1$, submodularity gives $v(u\mid U_1)\leq v(u\mid S_{1,t}^u).$
Thus,
\begin{align*}
        &\sum_{t=1}^{M}
        \sum_{u \in O_{2,t}^{>}}
        \left( v(u \mid U_1) - \beta \cdot c(u) \right) \\
        &\leq
        \sum_{t=1}^{M}
        \sum_{u \in O_{2,t}^{>}}
        \left( v(u \mid S_{2,t}^u) - \beta \cdot c(u) \right) \\
        &\leq
        \sum_{t=1}^{M}
        \sum_{u \in S_{2,t}}
        \left( v(u \mid S_{2,t}^u) - \beta \cdot c(u) \right) \\
        &=
        \sum_{t=1}^{M}
        \left( v(S_{2,t}) - \beta \cdot c(S_{2,t}) \right).
\end{align*}
The second inequality holds because $O_{2,t}^{>}\subseteq S_{2,t}$ and every element $u$ added to $S_{2,t}$ satisfies the acceptance condition $c(u)\leq p(u)\leq \frac{v(u\mid S_{2,t}^u)}{\beta+\rho_t/B},$
which implies $
v(u\mid S_{2,t}^u)-\beta\cdot c(u)
\geq
\frac{\rho_t}{B}c(u)
\geq 0.$

Combining the above two bounds, we obtain
\begin{align*}
      & v(U_1\cup O)-v(U_1)-\beta\cdot c(O\setminus U_1) \\
      &\leq
        \sum_{t=1}^{M}
        \left( v(S_{2,t})-\beta\cdot c(S_{2,t}) \right)
        +\rho_M+v(u^*).
\end{align*}
By symmetry, we similarly have
\begin{align*}
      & v(U_2\cup O)-v(U_2)-\beta\cdot c(O\setminus U_2) \\
      &\leq
        \sum_{t=1}^{M}
        \left( v(S_{1,t})-\beta\cdot c(S_{1,t}) \right)
        +\rho_M+v(u^*).
\end{align*}

Summing the above two inequalities gives
\begin{align*}
&v(U_1\cup O)+v(U_2\cup O)-v(U_1)-v(U_2)\\
&\quad-\beta\cdot c(O\setminus U_1)-\beta\cdot c(O\setminus U_2)\\
&\leq
\sum_{t=1}^{M}v(S_{1,t})
+
\sum_{t=1}^{M}v(S_{2,t})
+
2\rho_M
+
2v(u^*).
\end{align*}
In the last inequality, we have dropped the non-positive terms $-\beta\sum_{t=1}^{M}c(S_{1,t})
-\beta\sum_{t=1}^{M}c(S_{2,t}).$ Since $U_1\cap U_2=\emptyset$, submodularity and non-negativity of $v(\cdot)$ imply
\begin{align*}
v(U_1\cup O)+v(U_2\cup O)
&\geq v(U_1\cup U_2\cup O)+v(O)\\
&\geq v(O).
\end{align*}
Moreover, since $c(\cdot)$ is nonnegative, we have
\[
c(O\setminus U_1)+c(O\setminus U_2)\leq 2c(O).
\]
Therefore,
\begin{align*}
&v(O)-v(U_1)-v(U_2)-2\beta c(O)\\
&\leq
\sum_{t=1}^{M}v(S_{1,t})
+
\sum_{t=1}^{M}v(S_{2,t})
+
2\rho_M
+
2v(u^*).
\end{align*}
Finally, by subadditivity of nonnegative submodular functions, $v(U_1)+v(U_2)
\leq
\sum_{i=1}^{\ell}\sum_{t=1}^{M}v(S_{i,t}).$
Combining the last two inequalities yields $v(O)-2\beta\cdot c(O)
\leq
2\rho_M
+
2\sum_{i=1}^{\ell}\sum_{t=1}^{M}v(S_{i,t})
+
2v(u^*),$ which completes the proof.
\end{proof}


\subsection{Proof of Lemma \ref{lm:boundS}}
\begin{proof}
By Lemma~\ref{lm:boundrho}--\ref{lm:betawelfare}, Line~\ref{ln:break} of Algorithm \ref{alg:bfm} and $S^* \!\leftarrow \!\arg\!\max_{A \in \{S_{i,t} \mid i \in [\ell], t \in \{M-1, M\}\}\cup\{u^*\}} \{v(A) - p(A)\}$, we have for any $i\in[\ell]$ 
\begin{align*}
  & \sum_{t=1}^{M}v(S_{i,t})\!\leq \!\sum_{t=M \!- \!1}^{M}\frac{v(S_{i,t})\!-\!p(S_{i,t})}{1-1/\beta} \!+\!\sum_{t=1}^{M-2}\frac{v(S_{i,t})\!-\!p(S_{i,t})}{1-1/\beta}\\ 
        & \leq \frac{2\left(v(S^*) - p(S^*)\right)}{1-1/\beta}+\sum_{t=1}^{M-2}\frac{\rho_t}{1-1/\beta}\\
         &= \frac{2\left(v(S^*) - p(S^*)\right)}{1-1/\beta}+\frac{\rho_{M-1}-\rho_1}{(\alpha-1)(1-1/\beta)}\\
 &\leq\frac{2\left(v(S^*) - p(S^*)\right)}{1-1/\beta}+\frac{2\left(v(S^*) - p(S^*)\right)}{(\alpha-1)(1-1/\beta)}\\  
     &\leq \frac{2\alpha\left(v(S^*) - p(S^*)\right)}{(\alpha-1)(1-1/\beta)}.
        \end{align*}
The lemma then follows by summing the inequality over all $i \in [\ell]$.
\end{proof}

\subsection{Proof of Theorem \ref{thm:swmratio-non}}
\begin{proof}
We analyze the approximation ratio by considering two cases based on the total number of rounds $M$.

\textbf{Case 1:} $M \geq 2$.
    Combining Lemmas~\ref{lm:total}--\ref{lm:boundS}, using $v(u^*)-p(u^*)\leq v(S^*)-p(S^*)$ and rearranging the terms, we verify that
    \begin{align*}
       & v(O) - 2\beta \cdot c(O)\\
        &\!\leq\! 4\alpha\left(v(S^*) - p(S^*)\right) +\frac{4\ell\cdot\alpha\left(v(S^*) - p(S^*)\right)}{(\alpha-1)(1-1/\beta)}\\
      &\quad+\frac{2}{1-1/\beta} (v(S^*)-p(S^*))\\
  &\!\leq\! \big(4\alpha+ \frac{10\alpha-2}{(\alpha-1)(1-1/\beta)}\big)\left(v(S^*) - c(S^*)\right),
    \end{align*}
    where the final inequality is due to the observation that every element $u \in S^*$ must have accepted the price offered at Line \ref{ln:accept}, which implies $c(u) \leq p(u)$. Let $C=4\alpha+ \frac{10\alpha-2}{(\alpha-1)(1-1/\beta)}$. Rearranging the inequality yields
    \begin{align*}
      v(S^*)-c(S^*) \geq \frac{v(O)}{C} - \frac{2\beta}{C} c(O).
    \end{align*}
    
   \textbf{Case 2:} $M \leq 1$. By Lemma~\ref{lm:trivial}, \ref{lm:betawelfare} and the definition of $S^*$, we have
    \[
        v(O) - 2\beta \cdot c(O) \leq 2\epsilon + 6\frac{v(S^*)-p(S^*)}{1-1/\beta}.
    \]
    Rearranging terms and using $c(S^*)\leq p(S^*)$ gives:
    \begin{align*}
        v(S^*)\!-\! c(S^*)\! \geq\! \frac{1-1/\beta}{6}v(O)\! -\! \frac{\beta-1}{3}c(O) \!-\! \frac{\epsilon(1\!-\!1/\beta)}{3}.
    \end{align*}

Combining the above two cases and setting $\alpha=1+\frac{2\sqrt{6}}{3}$ and $\beta=4$ to optimize the approximation ratio, we finally get $v(S^*)-c(S^*)\geq \frac{3}{4(13+4\sqrt{6})}v(O)-c(O)-\epsilon/4\geq 0.0328\cdot v(O)-c(O)-\epsilon/4 $, which completes the proof.
\end{proof}

\subsection{Proof of Theorem \ref{thm:mono-ratio}}
\begin{proof}
Similar to Theorem \ref{thm:swmratio-non}, we analyze the approximation ratio by considering two cases based on the total number of rounds $M$.

\textbf{Case 1:} $M \geq 2$. Combining Lemmas~\ref{lm:boundrho}, \ref{lm:boundS}, \ref{lm:mono-total}, using $v(u^*)-p(u^*)\leq v(S^*)-p(S^*)$ and rearranging the terms, we verify that
    \begin{align*}
       & v(O) - \beta \cdot c(O)\\
        &\!\leq\! 2\alpha\left(v(S^*) - p(S^*)\right) +\frac{2\alpha\left(v(S^*) - p(S^*)\right)}{(\alpha-1)(1-1/\beta)}\\
        &\quad+\frac{1}{1-1/\beta} (v(S^*)-p(S^*))\\
  &\!\leq\! \big(2\alpha+ \frac{3\alpha-1}{(\alpha-1)(1-1/\beta)}\big)\left(v(S^*) - c(S^*)\right),
    \end{align*}
    where the final inequality is due to the observation that every element $u \in S^*$ must have accepted the price offered at Line \ref{ln:accept}, which implies $c(u) \leq p(u)$. Let $C=2\alpha+ \frac{3\alpha-1}{(\alpha-1)(1-1/\beta)}$. Rearranging the inequality yields
    \begin{align*}
      v(S^*)-c(S^*) \geq \frac{v(O)}{C} - \frac{\beta}{C} c(O).
    \end{align*}

   \textbf{Case 2:} $M \leq 1$. By Lemma~\ref{lm:betawelfare}, \ref{lm:mono-total}, $\rho_1=\epsilon$ and the definition of $S^*$, we have
\begin{align*}
    v(O)-\beta\cdot c(O)\leq \epsilon+2\frac{v(S^*)-c(S^*)}{1-1/\beta}
\end{align*}
    Rearranging terms gives:
\begin{align*}
  & v(S^*)\!-\!c(S^*)\!\geq\! \frac{{1\!-\!1/\beta}}{2}v(O)\!-\frac{\beta\!-1}{2} c(O)\!-\frac{\!\epsilon(1\!-\!1/\beta)}{2}.
\end{align*}

Combining the above two cases and setting $\alpha=1+\frac{\sqrt{6}}{2}$ and $\beta=3$ to optimize the approximation ratio, we finally get $v(S^*)-c(S^*)\geq 2 v(O)/(13+4\sqrt{6})-c(O)-\epsilon/3\geq 0.0877\cdot v(O)-c(O)-\epsilon/3$, which completes the proof.
\end{proof}

\subsection{Proof of Lemma \ref{lm:total-var}}
\begin{proof}
Using Definition \ref{def:opt} and the submodularity of $v(\cdot)$, we have
\begin{align*}
      &v(U_1\cup O)-v(U_1)\leq
      \sum_{t=1}^{M}
      \sum_{u\in O_{1,t}^{<}\cup O_{2,t}^{<}\cup O_{2,t}^{>}}
      v(u\mid U_1).
\end{align*}

For elements in $O_{1,t}^{<}$, since $S_{1,t}^u\subseteq U_1$, submodularity gives $v(u\mid U_1)\leq v(u\mid S_{1,t}^u).$ For elements in $O_{2,t}^{<}$, by definition, such elements have not been successfully inserted into either aggregate candidate sequence before they reject. Hence, they are unassigned when they are considered in round $t$. Therefore, if such an element is routed to sequence $2$, the greedy rule in Line~\ref{ln:var-greedy} implies $v(u\mid S_{1,t}^u)\leq v(u\mid S_{2,t}^u).$ Combining this with $S_{1,t}^u\subseteq U_1$ and submodularity gives $v(u\mid U_1)\leq v(u\mid S_{2,t}^u).$ Thus,
\begin{align*}
&\sum_{t=1}^{M}\sum_{u\in O_{1,t}^{<}\cup O_{2,t}^{<}}v(u\mid U_1)\leq
\sum_{t=1}^{M}
\left(
\sum_{u\in O_{1,t}^{<}}v(u\mid S_{1,t}^u)
+
\sum_{u\in O_{2,t}^{<}}v(u\mid S_{2,t}^u)
\right).
\end{align*}
Consider any rejected element $u\in O_{1,t}^{<}\cup O_{2,t}^{<}$. Suppose that $u$ is routed to sequence $i\in\{1,2\}$ in round $t$. The price $p(u)$ offered in round $t$ when $u$ exits the auction must be the lowest among all offers $u$ received, since prices are non-increasing. Since $u$ rejected this offer, we have $c(u)>p(u)$. Recall that the price is set as $p(u)=\min\left\{p_{old},\frac{v(u\mid S_{i,t}^u)}{\rho_t/B}\right\}.$ Thus, the rejection implies $c(u)>\frac{v(u\mid S_{i,t}^u)}{\rho_t/B},$ which can be rearranged to $v(u\mid S_{i,t}^u)<\frac{\rho_t}{B}c(u).$ Substituting this back into the summation, we obtain
\begin{align*}
&\sum_{t=1}^{M}\sum_{u\in O_{1,t}^{<}\cup O_{2,t}^{<}}v(u\mid U_1)\leq
\sum_{t=1}^{M}\sum_{u\in O_{1,t}^{<}\cup O_{2,t}^{<}}
\frac{\rho_t}{B}c(u)\\
&\leq
\sum_{t=1}^{M}\sum_{u\in O_{1,t}^{<}\cup O_{2,t}^{<}}
\frac{\rho_M}{B}c(u)\leq
\frac{\rho_M}{B}\cdot c(O)
\leq
\rho_M,
\end{align*}
where the last inequality follows from $O$ being feasible, i.e., $c(O)\leq B$, and $\rho_t$ being non-decreasing.

For any $u\in O_{2,t}^{>}$, round $t$ is the first round in which $u$ is successfully inserted into the aggregate sequence $U_2$. Hence, $u$ was unassigned before this insertion, and the greedy rule in Line~\ref{ln:var-greedy} chose sequence $2$ in that round. Therefore, $v(u\mid S_{1,t}^u)\leq v(u\mid S_{2,t}^u).$ Moreover, since $S_{1,t}^u\subseteq U_1$ and $u\notin U_1$, submodularity gives $v(u\mid U_1)\leq v(u\mid S_{1,t}^u).$ Thus,
\begin{align*}
&\sum_{t=1}^{M}\sum_{u\in O_{2,t}^{>}}v(u\mid U_1)\leq
\sum_{t=1}^{M}\sum_{u\in O_{2,t}^{>}}v(u\mid S_{2,t}^u)\\
&\leq
\sum_{t=1}^{M}\sum_{u\in S_{2,t}}v(u\mid S_{2,t}^u)=
\sum_{t=1}^{M}v(S_{2,t}).
\end{align*}
The second inequality holds because $O_{2,t}^{>}\subseteq S_{2,t}$ and every element $u$ added to $S_{2,t}$ satisfies the acceptance condition $c(u)\leq p(u)\leq \frac{v(u\mid S_{2,t}^u)}{\rho_t/B},$ which implies $v(u\mid S_{2,t}^u)\geq \frac{\rho_t}{B}c(u)\geq 0.$

Combining the above bounds, we obtain $v(U_1\cup O)-v(U_1)
\leq
\sum_{t=1}^{M}v(S_{2,t})+\rho_M.$ By symmetry, we similarly have $v(U_2\cup O)-v(U_2)
\leq
\sum_{t=1}^{M}v(S_{1,t})+\rho_M.$

Summing the two inequalities and using $U_1\cap U_2=\emptyset$, submodularity, and non-negativity of $v(\cdot)$ completes the proof.

\end{proof}

\subsection{Proof of Lemma \ref{lm:boundrho-var}}
\begin{proof}
 Consider round $M-1$. At least one of the candidate sets $S_{1,M-1}$ or $S_{2,M-1}$ must have triggered the break instruction upon the evaluation of some element; otherwise, we would have $R \setminus \left( \bigcup_{i=1}^\ell (S_{i,M-2} \cup S_{i,M-1})\right) = \emptyset$, which would imply that round $M$ is never reached, contradicting the definition of $M$. Let $S'$ denote such a candidate solution and let $u'$ be the corresponding element whose evaluation triggers the break condition. Using the submodularity of $v(\cdot)$ together with the condition that triggers the break instruction in Line~\ref{ln:var-break} (i.e., $v(S' \cup \{u'\})> \rho_{M-1}$), we obtain $\rho_{M-1}\leq v(S'\cup\{u'\})\leq v(S')+ v(u')$. Since $\rho_1$ is initialized as $\max_{u \in R} \{v(u)\}$, it follows that $v(u')\leq \rho_1$. Thus, we obtain $v(S') + \rho_1 \geq \rho_{M-1} = \frac{\rho_M}{\alpha}.$

Rearranging the inequality yields the following cases:
\begin{align*}
    &M\geq 3: v(S')\geq \rho_{M}/\alpha-\rho_1\geq \rho_{M}/\alpha-\rho_M/\alpha^2\\
   & M=2: v(S')=\rho_{1}=\rho_M/\alpha\\
   & M=1: v(S')=\rho_{1}=\rho_M
\end{align*}
In all cases, since $\alpha > 1$, we have $\frac{\alpha^2}{\alpha - 1}v(S')\ge \rho_M$. The lemma then follows from the definition of $S^*$ in Line~\ref{ln:star}.
\end{proof}

\subsection{Proof of Lemma \ref{lm:boundS-var}}
\begin{proof}
Using Lemma~\ref{lm:boundrho-var} together with the break condition in
Line~\ref{ln:var-break}, for any $i\in[\ell]$ we have
\begin{align*}
  & \sum_{t=1}^{M}v(S_{i,t})\!\leq \!v(S_{i,M})\!+\!v(S_{i,M-1})\!+\!\sum_{t=1}^{M-2}v(S_{i,t})\\ 
        & \leq v(S_{i,M})+v(S_{i,M-1})+\sum_{t=1}^{M-2}\rho_t\\
         &\leq v(S_{i,M})+v(S_{i,M-1})+\frac{\rho_{M-1}-\rho_1}{\alpha-1}\\
         &\leq v(S_{i,M})+v(S_{i,M-1})+\frac{v(S^*)}{\alpha-1}.
        \end{align*}
Using the definition of $S^*$ completes the proof.
    \end{proof}

\subsection{Proof of Theorem \ref{thm:ratio-vm}}
\begin{proof}
  Combining Lemmas~\ref{lm:total-var}--\ref{lm:boundS-var} and rearranging terms, we obtain $v(S^*)\geq \frac{\alpha-1}{2(\alpha^2+4\alpha-2)}v(O)$.  Setting $\alpha=1+\sqrt{3}$ to optimize the approximation ratio, we can get
 $v(S^*)\geq v(O)/(12+4\sqrt{3})$. 

Then we analyze the time complexity. We will show that the number of outer loop iterations $M$ in Algorithm \ref{alg:bfmsvm} is bounded by \( 2 + \lceil \log_{\alpha} n + \log_{\alpha} 2 \rceil = \mathcal{O}(\log n) \). Let \( S' \) and \( u' \) denote the candidate solution and element, respectively, whose evaluation triggers the break condition in Line~\ref{ln:var-break} during round $M-1$. We proceed by contradiction. Suppose that $M > 2 + \lceil \log_{\alpha} n + \log_{\alpha} 2 \rceil.$ It follows that
    \begin{align*}
   2v(O)\geq v(S')+v(u')\geq v(S'\cup \{u'\})\geq \rho_{M-1}=\rho_1\cdot \alpha^{M-2}> 2n\rho_1\geq 2v(O),
    \end{align*}
where the second inequality follows from the submodularity of $v(\cdot)$; the third inequality results from the condition triggering the break statement at Line \ref{ln:var-break}; and the last inequality holds because \( \rho_1 = \max_{u \in R}v(u)\). This yields a contradiction. Combining this with the fact that the inner loop executes at most \( \mathcal{O}(n) \) times, the proof is complete.
\end{proof}

\section{Pseudocode of the \VALG~Mechanism}\label{app:pseudocode}
\begin{algorithm}[!h]
\caption{\VALG: A Deterministic Budget-Feasible Mechanism for Submodular Valuation Maximization}
\label{alg:bfmsvm}
\begin{algorithmic}[1]
\REQUIRE budget $B$, parameters $\alpha > 1$, and number of candidate set sequences $\ell \in \{1, 2\}$
\STATE Let $R \leftarrow \{u\in \N\mid u~\text{accepts~price}~B\}$ and $p(u)\gets B$ for each $u\in R$
\STATE Initialize $t \gets 1$; $\rho_t \leftarrow \max_{u \in R} v(u)$; $S_{i,t}\gets\emptyset$ for each $i\in[\ell]$; let $a\in\arg\max_{u\in R}v(u)$; $S_{1,t}\gets\{a\}$; $\operatorname{owner}(u)\gets 0$ for each $u\in R$ and $\operatorname{owner}(a)\gets 1$
\REPEAT
    \STATE $t \leftarrow t + 1$; $\rho_t \leftarrow \alpha \cdot \rho_{t-1}$; $S_{i,t} \gets\emptyset$ for all $i\in[\ell]$
    \FOR{$u \in R \setminus \left( \bigcup_{i=1}^\ell S_{i,t-1} \right)$}
        \IF{$\operatorname{owner}(u)\neq 0$}
            \STATE $j\leftarrow \operatorname{owner}(u)$
        \ELSE
            \STATE $j \leftarrow \arg\max_{i \in [\ell]} v(u \mid S_{i,t})$\linelabel{ln:var-greedy}
        \ENDIF
        \STATE Update $p(u) \leftarrow \min \left\{ p(u), \frac{v(u \mid S_{j,t})}{\rho_t/B} \right\}$
        \IF{$u$ accepts price $p(u)$}
            \IF{$v(S_{j,t} \cup \{u\})> \rho_t$}\linelabel{ln:var-break}
                \STATE \textbf{break}
            \ELSE
                \STATE $S_{j,t} \leftarrow S_{j,t} \cup \{u\}$;
                \IF{$\operatorname{owner}(u)=0$}
                    \STATE $\operatorname{owner}(u)\gets j$
                \ENDIF
            \ENDIF
        \ELSE
            \STATE $R \leftarrow R \setminus \{u\}$;
        \ENDIF
    \ENDFOR
\UNTIL{$R \setminus \left( \bigcup_{i=1}^\ell (S_{i,t-1} \cup S_{i,t}) \right)=\emptyset$}
\STATE Set $M \gets t$; $S^* \leftarrow \arg\max_{A \in \{S_{i,t} \mid i \in [\ell], t \in \{M-1, M\}\}} v(A)$\linelabel{ln:star}
\OUTPUT $S^*$ and $p(u)$ for each $u\in S^*$
\end{algorithmic}
\end{algorithm}

\section{Experiments on Valuation Maximization}\label{app:exp-vm}

\begin{figure*}[!h]
	\begin{center}
		\centerline{\includegraphics[width=1.0\linewidth]{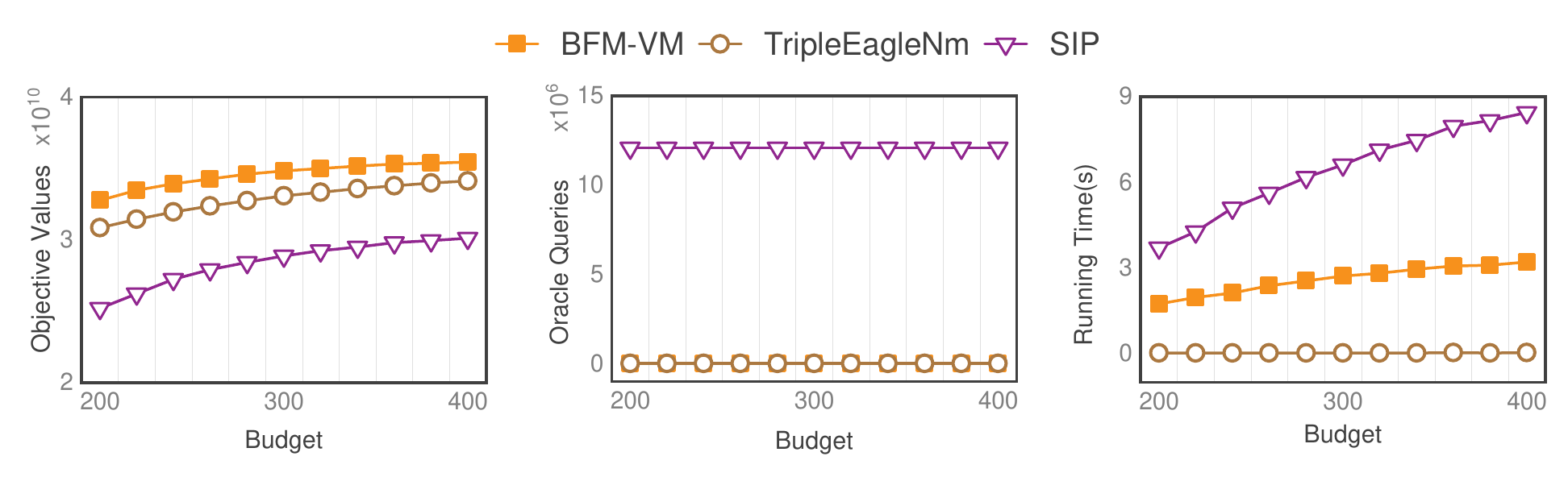}}
		\caption{Experiments on the Crowdsourcing Application}\label{fig:vm}
	\end{center}
\end{figure*}

In this section, we conduct experiments on a crowdsourcing application similar to those in~\cite{han2025efficient,huang2023randomized,jalaly2021simple}. In this setting, the set of workers (sellers) $\mathcal{N}$ each possess an image,  and a buyer seeks to crowdsource a representative and diverse set $S$ of images from the workers. Following~\cite{han2025efficient,huang2023randomized,fahrbach2019non,mirzasoleiman2016fast}, we use the CIFAR-10 dataset~\cite{krizhevsky2009learning} that contains tens of thousands of $32 \times 32$ color images, and the buyer's valuation for a selected set $S$ is defined as: $v(S) = \frac{1}{|\N|} \biggl(\sum_{u \in \mathcal{N}}\sum_{v \in S} s_{u,v} -\sum_{u \in S} \sum_{v \in S} s_{u,v} \biggr),$ where $s_{u,v}$ denotes the similarity between images $u$ and $v$, quantified via the inner product of their 3,072-dimensional pixel vectors. As indicated by \cite{mirzasoleiman2016fast}, this valuation function is non-monotone and submodular, where the first term and second term capture the coverage and diversity of $S$, respectively. Each worker $u$ has a private cost $c(u)$ for providing their image. Following~\cite{mirzasoleiman2016fast,han2025efficient}, we set the cost proportional to the standard deviation of the image's pixel intensities, normalized to have an average value of 0.1, which assigns higher costs to high-contrast images and lower costs to blurry images. The buyer operates under a budget constraint $B$ on the total payment to workers. Our objective is to maximize the valuation $v(S)$ subject to this budget constraint. Following~\cite{han2025efficient,huang2023randomized}, we use images with labels in \{Airplane, Automobile, Bird\} as our groundset $\N$.


In this application, we compare our \VALG~mechanism tailored for non-monotone submodular valuation maximization in procurement auctions, against state-of-the-art mechanisms for the same objective. The baselines include:
\begin{itemize}
    \item \textbf{TripleEagleNm}~\cite{han2025efficient}, which achieves the best-known randomized approximation ratio for non-monotone submodular valuation maximization.
    \item \textbf{Simultaneous-Iterative-Pruning, abbreviated as SIP}~\cite{balkanski2022deterministic}, which achieves the best-known deterministic approximation ratio for non-monotone submodular valuation maximization.
\end{itemize}

We evaluate the performance of all mechanisms based on three metrics: (1) the objective function value (i.e., valuation $v(S)$), (2) the number of oracle queries to the valuation function $v(\cdot)$, and (3) the wall-clock running time. The experimental results are summarized in Figure~\ref{fig:vm}. It is evident that our by-product variant, \VALG, consistently surpasses all baselines in terms of objective function value. Quantitatively, \VALG~achieves an average improvement of $21.85\%$ over SIP and $5.27\%$ over TripleEagleNm. Furthermore, regarding computational efficiency, our deterministic mechanism \VALG~significantly outperforms SIP, the current state-of-the-art deterministic mechanism. These empirical results underscore the effectiveness of our approach when extended to valuation maximization objectives.

\section{Wall-Clock Running Time for Influence Maximization}\label{app:runtime}
\begin{figure*}[!h]
	\begin{center}
		\centerline{\includegraphics[width=1.0\linewidth]{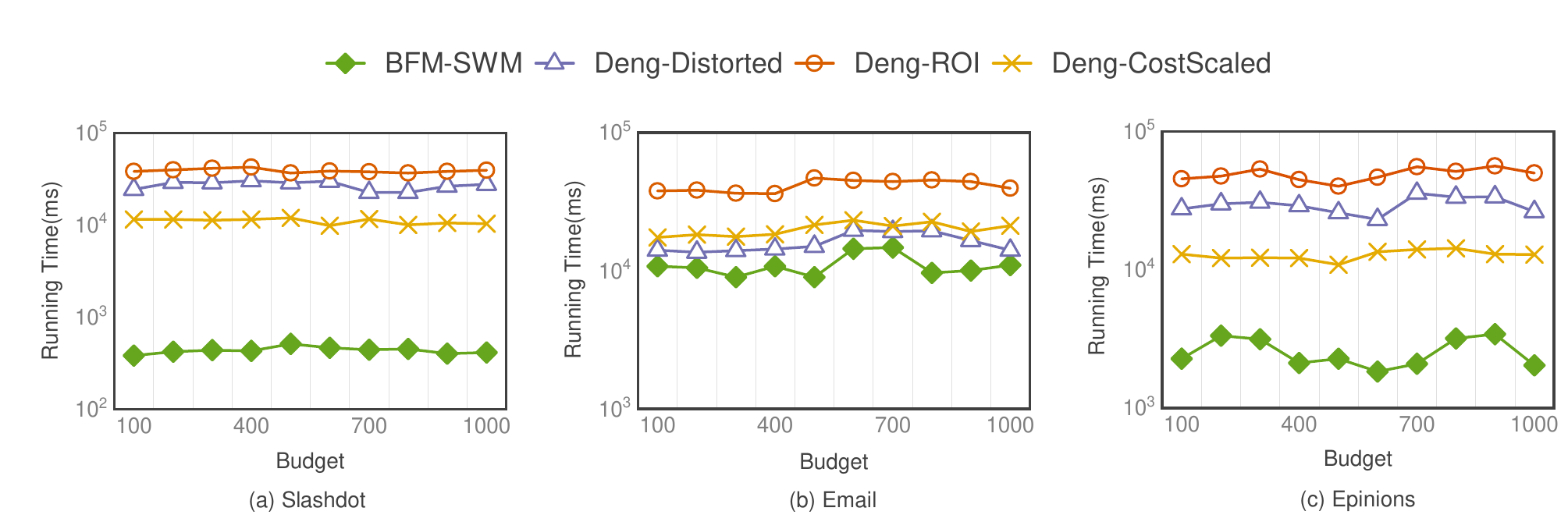}}
		\caption{Comparison of wall-clock running time for influence maximization across three datasets.}\label{fig:wm-time}
	\end{center}
\end{figure*}
In this section, we present the wall-clock running time results (Figure~\ref{fig:wm-time}) for the welfare maximization experiments (influence maximization application) presented in Section~\ref{sc:exp}. As noted in the main text, the observed running time trends are consistent with the query complexity results, further validating the superior computational efficiency of our \ALG~mechanism.

\end{document}